\documentclass[10pt]{article}


\marginparwidth 0.25in 
\oddsidemargin 0.05in 
\evensidemargin 0.05in 
\marginparsep 0.05in
\topmargin 0.05in 
\textwidth 6.45in \textheight 8.5in

\usepackage[english]{babel}
\selectlanguage{english}
\usepackage[utf8]{inputenc}
\usepackage{graphicx}

\usepackage{amsmath}
\usepackage{amssymb}
\usepackage{upgreek}
\usepackage{bbold} 

\usepackage{amsthm}

\usepackage{authblk} 

\usepackage{enumerate}

\usepackage{xcolor}

\usepackage{algorithm}
\usepackage[noend]{algpseudocode} 


\theoremstyle{definition}

\theoremstyle{plain}
\newtheorem{thm}{Theorem}[section]
\newtheorem{cor}{Corollary}[section]
\newtheorem{lem}{Lemma}[section]
\newtheorem{prop}{Proposition}[section]

\theoremstyle{remark}

\newtheorem{remark}{Remark}[section]


\newcommand{\proba}{\mathbb{P}}
\newcommand{\espe}{\mathbb{E}}
\newcommand{\suce}[2]{\left(#1_#2\right)_{#2\geq1}}
\newcommand{\ndep}{^{(n)}}
\newcommand{\Var}{\mathrm{Var}}
\newcommand{\Cov}{\mathrm{Cov}}

\newcommand{\natu}{\mathbb{N}}
\newcommand{\real}{\mathbb{R}}

\newcommand{\realmas}{\real_{\geq0}}


\begin{document}

\author[2]{Matthieu Jonckheere\thanks{mjonckhe@dm.uba.ar}}
\author[3]{Manuel S\'aenz\thanks{msaenz@ictp.it}}
\affil[2]{\emph{\small{Instituto de C\'alculo, Universidad de Buenos Aires/CONICET, Argentina}}}
\affil[3]{\emph{\small{International Center for Theoretical Physics, UNESCO, Italy}}}

\date{}
\title{\textbf{Exact asymptotic characterisation of running time for \\ approximate gradient descent on random graphs}}
\maketitle

\begin{abstract}
  
  In this work we study the time complexity for the search of local minima in random graphs whose vertices have i.i.d. cost values. We show that, for Erdös-Rényi graphs with connection probability given by $\lambda/n^\alpha$ (with $\lambda > 0$ and $0 < \alpha < 1$), a family of local algorithms that approximate a gradient descent find local minima faster than the full gradient descent. Furthermore, we find a probabilistic representation for the running time of these algorithms leading to asymptotic estimates of the mean running times.
  
  \vspace{0.3cm}
  \noindent
  \it{Keywords: Random Graphs, Optimisation, Local Algorithms, Gradient Descent}
  
\end{abstract}

\section{Introduction}\label{sec:intro}

The last decades saw an exponential increase in the sizes of data sets available on many fields. Many tools and algorithms for managing it have been recently developed. But they usually come with very difficult technical issues, as implementing optimisation routines over spaces of high dimension, which may be extremely computationally demanding. In this context, first-order methods for solving optimisation problems have become a cornerstone of machine learning and operation research.
 
One of the basic (but very common in practice) strategies for minimisation is the celebrated \emph{gradient descent}. It works by sequentially updating the function parameters in the opposite direction of its gradient. This procedure can be shown to lead, under general assumptions, to a local minimum. Because of the computational cost or the impossibility of computing the true gradient with respect to large collections of parameters, variants of the gradient descent are commonly used in practice. One of such variants is the well-known \emph{stochastic gradient descent}, in which the same procedure is replaced by computing a stochastic approximation (usually a sample) of the cost function and which has been successful in machine learning applications \cite{bottou1,bottou2,goodfellow2016deep,sun}.
 
In order to deal with high-dimensionality and complexity, a lot of research has focused on iterative methods that, at each iteration, only optimise over a subset of the decision variables \cite{wright}.
In this framework, several procedures have been proposed: block-coordinate descent methods, 
which alternates the optimisation steps between blocks of selected variables (see also variations around the randomised Kaczmarz algorithm), or randomised block-coordinate
 and alternating randomised block coordinate descent \cite{block,block2}.
 In this last version, the algorithm alternates
 gradient descents steps (or exact optimisation)
 on randomly selected blocks.
Optimisation via sequential exploration is also relevant for
many social, economic, and ecological data sets that are structured as graphs. A crucial task in those contexts is to find vertices that locally minimise/maximise some function over the underlying network. 
For instance, many problems have been modelled as potential games whose Nash equilibria are given by the maxima of a potential \cite{yi2014potential}. These problems have motivated recent studies on optimisation of graph functions. For example, some algorithms were proposed to find extremes of graph functions efficiently, while in \cite{durand2016complexity}, the best response dynamics was studied as an algorithm to find Nash equilibria in random potential games.

\

In this work, we study a toy model based on random graphs to understand possible drastically different regimes concerning the behaviour of this type of strategies. We model the states of the optimisation problem by the nodes of a random network, where the edges correspond to possible
explorations of a new state from a given state.
The fact that randomisation/block coordinates procedures are used is reflected in the randomness of the adjacency matrix of the network.

We make two more modelling assumptions.
First the value of the cost to be optimised will be given by 
a random field (on the nodes of the networks) and the mean number of states explored during the exploration scales according to the connectivity of the networks, which in turn is a function of the total number of nodes. This scaling is such that the
total number of neighbours considered at each exploration step is of order $1$ (which makes sense in practice).

We focus here on the case of i.i.d values for the cost at each node which is of course an over-simplistic assumption. However, it has two main advantages: (i) it allows to perform a very detailed theoretical analysis,
(ii) we believe that the obtained structural results might also be
relevant for more general fields with sufficiently low correlations.
More complex correlations structures will be considered in on-going work.

\paragraph{Contribution}
We study in detail this simple model of sequential exploration on Erdös-Rényi graphs, giving first quantitative results
for approximate gradient optimisation.
We derive the running time complexity for the whole range of parametrisations of the connection probability, finding in each case its asymptotic behaviour. We further show that, in these structures, this algorithm is faster (in a stochastic order sense) than any other \emph{local search algorithm}. In this way, these results provide a possible way of understanding the improvement in the performance achieved by stochastic variants of gradient descent. 
It also allows to give asymptotic estimates of the complexity of finding Nash equilibria in sparse random potential games as considered in \cite{durand2016complexity}.

\

The remaining of the paper is organised as follows. In Section \ref{sec:mainpot} we present the main results of the chapter, how they relate to each other, and we discuss how they characterise the complexity of the algorithm. While in Section \ref{sec:characpot} we present a simplified dynamics that will be of key importance in deriving these results. And finally, in Section \ref{sec:proofspot} we give the proofs for Erdös-Rényi graphs.

\subsection{Description of the process studied}

We focus on Erdös-Rényi graphs with $n$ nodes with connection probabilities $p$ that scale with respect to the graph size according to $p(n)= \lambda n^{-\alpha}$ (with $\lambda > 0$ and $0 < \alpha < 1$). 

We present in Algorithms 1 and 2 respectively the pseudo-code for the Gradient Descent (GD) algorithm and for the local optimisation algorithm studied, which we will refer to as \emph{approximate gradient descent} (AGD).

\begin{algorithm}[ht]
\caption{Gradient Descent (GD)}
\label{algo:1}
\begin{algorithmic}
\State INPUTS: Adjacency matrix $A$ and positive values $U(i)$ for each node $i$
    \State Set $v$ as a uniform vertex in $[n]$
    \State Set $N$ as the second indices of the non-zero elements of $A(v,:)$
    \State Set $min$ as $\mbox{Min}_{i \in N} U(i)$
    \While{$U[v] > min$}
        \State Set $v_{min} = \mbox{ArgMin}_{i \in N} U(i)$
        \State Set $i$ as the number of elements in $N\cup\{v\}$ smaller than $v_{min}$
        \State Remove the vertices $N \cup \{v\} \backslash \{v_{min}\}$ from $A$ and $U$
        \State Set $v = v_{min} - i$
        \State Set $N$ as the second indices of the non-zero elements of $A(v,:)$
        \State Set $min$ as $\mbox{Min}_{i \in N} U(i)$
    \EndWhile
    \Return v
\end{algorithmic}
\end{algorithm}

\begin{algorithm}[ht]
\caption{Approximate Gradient Descent (AGD)}
\begin{algorithmic}
\State INPUTS: Adjacency matrix $A$ and positive values $U(i)$ for each node $i$
    \State Set $v$ as a uniform vertex in $[n]$
    \State Set $k = 0$
    \While{$k < \mbox{Rows}(A)/B$}
        \State Set $k+=1$
      \State Set $I$ as the vector of integers ranging from $(k-1)B$ to $\mbox{Min}(kB,\mbox{Rows}(A))$ 
        \State Set $N$ as the second indices of the non-zero elements of $A(v,I)$
        \State Set $min$ as $\mbox{Min}_{i \in N} U(i)$
        \If{$U(v) > min$}
            \State Set $k = 0$
            \State Set $v_{min}$ equal to $\mbox{ArgMin}_{i \in N} U(i)$
            \State Set $i$ as the number of elements in $N\cup \{v\}$ smaller than $v_{min}$
            \State Remove the vertices $N\cup\{v\}\backslash\{v_{min}\}$ from $A$ and $U$
            \State Set $v = v_{min} - i$
        \Else
            \State Set $i$ as the number of elements in $N$ smaller than $v$
            \State Remove the vertices $N$ from $A$ and $U$
            \State Set $v = v - i$
        \EndIf
    \EndWhile
    \Return v
\end{algorithmic}
\end{algorithm}

While the GD in each step explores the entire neighbourhood of the current vertex, the AGD implements the same dynamic but only exploring in each step {\bf some fixed number $B(n) \in \natu$ of vertices.} For this, at step $i\geq1$, it first determines which of these $B(n)$ vertices explored are connected to the vertex $v_i$ in which the algorithm is currently at. For simplicity, we drop the dependence in $n$ when it is not explicitly needed. The set of these vertices in the neighbourhood of $v_i$ will be called the \emph{partial neighbourhood}  $\mathcal{N}_i$. If any of them has a cost smaller than the one of the current vertex, the algorithm sets $v_{i+1}$ equal to the vertex in $\mathcal{N}_i$ with the smallest cost, and the vertices in $\{v_i\} \cup \{ \mathcal{N}_i \backslash \{v_{i+1}\}\}$ are removed from the graph. Otherwise, if no vertex with lower cost is found in $\mathcal{N}_i$, then the vertices in $\mathcal{N}_i$ are removed from the graph and $v_{i+1}$ is set equal to $v_i$.

The dynamics considered will now be described in terms of a stochastic process over a random graph. As underlined in the introduction, we consider very simple assumptions. However, the resulting toy model already 
displays an interesting behaviour. Given a vertex set $[n]$, we hence assume that the underlying graph is an Erd\"os-R\'enyi graph  $G\ndep\sim\mbox{ER}_n(\lambda/n^\alpha)$. Moreover the values of the cost of each vertex $v \in [n]$ are given by random variables $W_v\sim\mbox{Unif}[0,1]$ independent of each other.

At each step of the dynamics, each vertex of the graph will be said to either be \emph{unexplored} (if it does not belong to a partial neighbourhood already visited by the process) or \emph{explored} (if it does). The dynamics in question will be given by a Markov chain $\left(X\ndep_i,U\ndep_i\right)_{i\geq1}$ defined on the graph $G\ndep$, where $X\ndep_i$ gives the cost of the vertex the algorithm is currently at during step $i$ and $U\ndep_i$ the corresponding number of unexplored vertices. Initially the process is located in a vertex chosen uniformly from $[n]$, hence $X\ndep_1$ is set equal to the cost of that vertex and $U\ndep_1=n-1$. Then, if at step $i>1$ the process is at some vertex $v_i$ with cost $W_{v_i}$, at step $i+1$ we will have that $X\ndep_{i+1} = \min_{v\in\mathcal{N}_i(v_i)\cup \{v_i\}} W_v$, $U\ndep_{i+1}$ will be equal to $U\ndep_i-|\mathcal{N}_i|$.

Suppose that after the $i$-th step, the process is at vertex $v_{i}$. In this case, the number of new cost values explored during the next step will be $E\ndep_i := |\mathcal{N}_i| \sim \mbox{Binom}(B, \lambda/n^\alpha)$. Note that this will not be true for the last partial neighbourhood of a given vertex. Nevertheless, this partial neighbourhoods will appear only a few times, and then the correction given by taking their true sizes will be negligible. Here and in the rest of the paper we choose $B(n)$ to be a function scaling as $\beta\lambda^{-1} n^\alpha$ (for some $\beta > 0$) so that at each step the number of explored vertices is $\Theta(1)$. In this case we will have that $E\ndep_i \sim \mbox{Pois}(\beta) + o_d(1)$.

We define the sequence of jumping times for this dynamics in an inductive way according to $\tau\ndep_0 := 0$ and, for $j\geq1$, $\tau\ndep_{j+1} := \inf\{i \geq 0 : X\ndep_{\tau_j+i} < X\ndep_{\tau_j}\}$. These times give the amount of steps the algorithm remains in each visited vertex. Note that if at step $i\geq1$ the algorithm moves for the $j$-th time to the vertex $v_i$, it will be a local minimum of the cost function iff $\tau\ndep_j = \lceil U\ndep_i/B \rceil$.

We assume that  the following important feature for the dynamics:
\begin{enumerate}
    \item it runs until a vertex that is a local minimum has been reached
    \item
    {\bf and} until the algorithm has explored all its neighbourhood in $G\ndep$, effectively checking it is a minimum.
\end{enumerate}  
As a consequence, the algorithm runs until the stopping time 
\begin{equation}\label{eq:defS}
    \mathcal{S}\ndep := \inf\left\{ i\geq n/B : \exists j\geq1 \ s.t. \ X\ndep_i = X\ndep_{i- \lceil U\ndep_j/B \rceil}\right\}.
\end{equation}
Although our results focus in characterising $\mathcal{S}\ndep$, we also study 
\begin{equation}\label{eq:defV}
    \mathcal{V}\ndep := \sum_{i=1}^{\mathcal{S}\ndep} \mathbb{I}_{\{X\ndep_i > X\ndep_{i-1}\}}
\end{equation}
i.e., the number of vertices visited by the process; and the total number of explored vertices
\begin{equation}
    \mathcal{E}\ndep := \sum_{i=1}^{\mathcal{S}\ndep} E\ndep_i.
\end{equation}

Notice that during each step of the exploration, $B(n)$ new vertices are explored of which approximately $Pois(\beta)$ are connected to the vertex the algorithm is currently at. This means that in each step, although $B(n)$ vertices are explored, only $\beta$ new cost values are revealed in mean. The grouping of the exploration into blocks of size $B(n)$ is just done for convenience during the analysis. As we will see later, the exact value of $\beta$ does not affect the first order asymptotic value of the number $\mathcal{E}\ndep$ of cost values explored during the running of the algorithm.

\paragraph{Coupling of the sequence of processes.} Here and in the rest of the chapter we set $\suce{X\ndep}{i}$ to be a sequence of random processes as the one defined above, each one taking place over a graph $G\ndep$ constructed over the vertex set $[n]$. Let $\suce{W}{j}$ be a sequence of independent uniform $[0,1]$ random variables. We will couple the sequence of processes $\suce{X\ndep}{i}$ in such a way that whenever a new vertex is explored, its cost value is drawn from $\suce{W}{j}$ sequentially. That is, in the whole sequence $\suce{X\ndep}{i}$, the tenth cost value explored will be given by $W_{10}$ independently of the value of $n$. Note that this does not affect the individual distributions of the processes but makes their joint distributions correlated in a specific way. This will later be used to couple the whole sequence to a given instance of a simpler dynamics.

\section{Main results and discussion}\label{sec:mainpot}

Our first result proves that the dynamics in question is faster, in the stochastic order sense, than any other \emph{local search algorithm}. The word \emph{local} here means that at each step the algorithms can only either decide to \emph{stay} at the vertex it currently is at or to \emph{move} to some of the newly discovered vertices in the current neighbourhood it is exploring. Note that by this definition, GD and AGD are local search algorithms.

\begin{lem}\label{lem:BRAoptimality}
    Let $G\ndep,\bar{G}\ndep \sim \mbox{ER}_n(\lambda n^{-\alpha})$ be two independent random graphs, $\mathcal{E}\ndep$ be the number of vertices explored by the AGD before finding a local minimum in $G\ndep$, and $\mathcal{\bar E}$ the analogous quantity for another local search algorithm run on $\bar{G}\ndep$. Then, $\mathcal{E} \leq_{st} \mathcal{\bar E} + \mathcal{O}_\mathbb{P}(1)$.
\end{lem}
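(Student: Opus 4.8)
The plan is to set up an explicit coupling between the AGD run on $G\ndep$ and an arbitrary local search algorithm $\mathcal{A}$ run on $\bar G\ndep$, so that on the coupled space the AGD has always explored no more vertices than $\mathcal{A}$, up to an additive $\mathcal{O}_\mathbb{P}(1)$ correction coming from the final "certification" phase. The guiding intuition is that a local search algorithm, by definition, must eventually (i) reach a vertex $v$ that is a local minimum and (ii) — if it is to be sure $v$ is a local minimum — reveal all of $v$'s neighbours; and the AGD is the greedy strategy that does exactly this and nothing else: it never declines to move to a strictly-better discovered vertex, and it never re-examines a cost value twice. So any deviation of $\mathcal{A}$ from the AGD policy can only cost extra explorations.

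First I would make precise the exploration process common to all local search algorithms: think of the randomness as an infinite i.i.d. $\mathrm{Unif}[0,1]$ stream $(W_j)_{j\ge1}$ (the coupling already set up in the paragraph "Coupling of the sequence of processes"), together with the graph structure, which for Erd\H{o}s--R\'enyi can be exposed edge-by-edge as vertices are queried. Both algorithms, at each step, query a batch of not-yet-seen vertices, learn which are adjacent to the current vertex, and then either stay or move to one discovered vertex; the cost of the $j$-th distinct vertex ever revealed is $W_j$. Since $G\ndep$ and $\bar G\ndep$ are independent, I am free to choose the coupling. I would couple so that the sequence of \emph{distinct revealed cost values} seen by $\mathcal{A}$ on $\bar G\ndep$ is a subsequence-compatible refinement of the one seen by AGD; concretely, run AGD, and feed $\mathcal{A}$ the same $W_j$'s in the order it reveals vertices, using the independence of $\bar G\ndep$ to realize whatever adjacency pattern $\mathcal{A}$'s trajectory requires.

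The core of the argument is a monotonicity/domination claim proved by induction on the steps of $\mathcal{A}$: at every point in $\mathcal{A}$'s execution, if $\mathcal{A}$ has so far revealed a set of cost values whose running minimum is $m_{\mathcal{A}}$, then AGD, having been run "in the background", has revealed a set of cost values with running minimum $m_{\text{AGD}} \le m_{\mathcal{A}}$ while having explored at most as many vertices. The key case analysis: (a) when $\mathcal{A}$ stays at a vertex despite having discovered a strictly smaller cost, it wastes future work — AGD would have descended, and descending can only decrease the current value and hence (since the cost field is i.i.d.\ and the number of unexplored vertices only shrinks) stochastically shorten the remaining search; (b) when $\mathcal{A}$ does descend, it descends to \emph{some} discovered vertex, whereas AGD descends to the \emph{minimum} discovered one, which dominates; (c) $\mathcal{A}$ must, before halting, have explored the full neighbourhood of its terminal local minimum, which contributes the same order of vertices ($\mathrm{Pois}$-type, hence $\mathcal{O}_\mathbb{P}(1)$) as AGD's own certification step, explaining the additive slack. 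Assembling (a)–(c) gives $\mathcal{E} \le \bar{\mathcal{E}} + \mathcal{O}_\mathbb{P}(1)$ pathwise on the coupled space, hence $\mathcal{E} \le_{st} \bar{\mathcal{E}} + \mathcal{O}_\mathbb{P}(1)$.

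The main obstacle I anticipate is making the "greedy is optimal" induction fully rigorous without a clean recursive structure: a general local search algorithm may interleave staying and moving in complicated history-dependent ways, and the number of \emph{vertices} explored is not the same as the number of \emph{distinct cost values} revealed (a batch of $B$ vertices yields only $\mathrm{Pois}(\beta)$ neighbours in expectation), so I must carefully separate "exploration cost" (all $B$-vertex batches, counted by $\mathcal{E}\ndep$) from "information gained" (distinct neighbours revealed). The cleanest route is probably to argue at the level of revealed cost values — show AGD reaches a certified local minimum after revealing a set of costs that is, in the coupling, a prefix of (or otherwise dominated by) those revealed by $\mathcal{A}$ — and then translate back to $\mathcal{E}\ndep$ using that each revealed neighbour corresponds to one $B$-vertex batch up to lower-order terms, with the discrepancy absorbed into the $\mathcal{O}_\mathbb{P}(1)$. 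I would also need the elementary fact that for the search process the remaining running time is stochastically increasing in the current cost value and in the number of unexplored vertices, which follows from the i.i.d.\ structure of $(W_j)$.
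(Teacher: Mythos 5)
Your overall strategy (couple the cost stream and the edge discoveries, argue by induction that the greedy AGD dominates, absorb the final batch into the $\mathcal{O}_\mathbb{P}(1)$ term) is the same as the paper's, but as written there is a genuine gap at the decisive step: your induction invariant --- ``AGD's running minimum of revealed costs is at most $\mathcal{A}$'s, and AGD has explored no more vertices'' --- does not by itself control $\mathcal{E}\ndep$, because $\mathcal{E}\ndep$ is determined by when the algorithm \emph{certifies} a local minimum, i.e.\ by how long it has sat at its current vertex exhausting the relevant unexplored vertices without improvement, not by the value of its running minimum. The paper closes exactly this hole: it compares $\mathcal{A}$ with an auto-jump variant of AGD under a coupling in which, at each one-vertex-per-step exploration, an edge is found in one graph if and only if it is found in the other and the revealed cost is the same $W_i$; the invariant is then simply that the variant always sits at the overall running minimum, and the key observation is about the waiting times $T_i$, $\bar T_i$ since each process's last jump: if $\bar T_i > T_i$, the variant jumped after $\mathcal{A}$'s last jump, so a cost smaller than $\bar X_i$ has been revealed adjacent (by the edge coupling) to $\mathcal{A}$'s current vertex, which therefore cannot be a local minimum. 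Hence at the moment $\mathcal{A}$ legitimately certifies a solution the variant has already terminated, giving the pathwise inequality, and the $\mathcal{O}_\mathbb{P}(1)$ accounts only for the difference between AGD and its auto-jump variant (at most one final batch). Some version of this waiting-time/certification comparison (or an equivalent) is what your assembly of cases (a)--(c) is missing.

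A second, related problem is your case (a): you invoke an unproven claim that the remaining running time is stochastically increasing in the current cost value and in the number of unexplored vertices, and you insert this stochastic comparison in the middle of what is supposed to be a pathwise induction. That is not innocuous: each such divergence would require a Strassen-type re-coupling of the two futures, and you would then have to argue that the composition of these re-couplings still yields the claimed stochastic order --- machinery you have not set up, and which the paper's argument avoids entirely (no monotonicity of remaining times is ever needed, since the variant jumps whenever a strictly smaller cost is revealed and the comparison is purely pathwise). Relatedly, your coupling description (``subsequence-compatible refinement'' of revealed cost values) is too loose for the induction to go through: you need the two processes to reveal new neighbours at matched steps, which is what the paper's edge-matched coupling provides.
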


This result is analogous to the one established in \cite{durand2016complexity} for  equilibrium finding algorithms in potential games. This lemma, though very simple, is important since it implies that our results characterise the complexity of local searches over Erd\"os-Rényi graphs with uniform costs.

We can now focus on characterising the running times for the AGD algorithm. Our main contribution is the characterisation of the asymptotic running times for the search algorithm. As stated above, because of the optimality of the dynamics studied on the set of local search algorithms, these results determine the complexity of the task. We also give bounds for the variance of the stopping times showing that it does not converge to a constant in the limit. It is interesting to note that, although the settings are very different, the asymptotics of the mean running time is exactly the same as the one found in \cite{durand2016complexity} for the search of equilibria in potential games. 

The following Theorem is instrumental to study the total number of steps $\mathcal{S}^{(n)}$.
It allows to essentially replace it by a functional of independent inhomogeneous Poisson processes. 
Define $K(n) := \lfloor \lambda n^{1-\alpha} \rfloor$. Let (for $i< K(n)$) $\lambda_i(t) := \left( 1 - e^{-t} \right)^{i-1} e^{-t}$ and $\lambda_{K}(t) := \left( 1 - e^{-t} \right)^{K(n)}$. For every $i \in [K(n)]$ let $N^{(i)}_t$ be an independent and inhomogeneous Poisson process of intensity $\lambda_i(t)$. For these Poisson processes define the associated random variable $\mathcal{\tilde S}\ndep := K(n) + \sum_{i=1}^{K(n)-1} i N^{(i)}_{T_K}$, where $T_K := \inf\{t\geq0 : N^{(K)}_t > 0 \}$.

\begin{thm}\label{thm:sparsemean}
    If $G\ndep\sim\mbox{ER}_n(\lambda n^{-\alpha})$ with $\lambda >0$ and $0<\alpha<1$, then 
    \begin{equation*}
        \espe\left(\bigg| \frac{\mathcal{S}\ndep}{n^{1-\alpha}} - \frac{\mathcal{\tilde S}\ndep}{\beta n^{1-\alpha}} \bigg| \right) \xrightarrow{n\rightarrow\infty}0 \ \ \ \ \ \mbox{and} \ \ \ \ \ \espe\left(\big| \mathcal{E}\ndep - \mathcal{\tilde S}\ndep \big| \right) \xrightarrow{n\rightarrow\infty}0.
    \end{equation*}
\end{thm}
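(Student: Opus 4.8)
The plan is to realise, on one probability space, the AGD chain $(X\ndep_i,U\ndep_i)_{i\geq1}$ together with the inhomogeneous Poisson processes $N^{(1)},\dots,N^{(K(n))}$, and then to read $\mathcal{E}\ndep$ and $\mathcal{S}\ndep$ off the $N^{(i)}$'s up to error terms that I will control in $L^1$. The structural fact to exploit, already built into the coupling of the previous section, is that the costs of explored vertices are consumed from one fixed i.i.d.\ $\mathrm{Unif}[0,1]$ sequence, independently of the adjacency structure; hence the only randomness feeding the \emph{cost} dynamics is that sequence together with the $\mathrm{Bernoulli}(\lambda n^{-\alpha})$ coin tosses deciding which of the $B$ vertices of each block are adjacent to the current vertex.

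First I would pass from the block dynamics to an idealised ``one revealed cost at a time'' dynamics, moving three discrepancies into an error term: the replacement of the $\mathrm{Binom}(B,\lambda n^{-\alpha})$ block sizes by $\mathrm{Pois}(\beta)$; the truncated last block attached to each visited vertex; and the event that the block triggering a move contains two or more costs below the current record $X\ndep_i$, which biases the jump ratio away from uniform. At a given step these events have probability $O(\beta^2)$, respectively $O(\beta X\ndep_i)$, only $\mathcal{V}\ndep=O(\log n)$ vertices are ever visited, and the true and idealised dynamics can be coupled to coincide off them, so I expect their total contribution to be $o(n^{1-\alpha})$ in $L^1$ — and, by arranging the coupling so that almost all of these discrepancies vanish \emph{identically} rather than merely being small, $o(1)$ for the quantities in the second estimate. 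In the idealised dynamics, writing $Y_j$ for the cost of the $j$-th distinct visited vertex, one has $Y_1\sim\mathrm{Unif}[0,1]$ and $Y_{j+1}/Y_j\sim\mathrm{Unif}[0,1]$, independently; hence $t_j:=-\log Y_j$ is a rate-$1$ Poisson process on $\realmas$, and the number $G_j$ of costs revealed while the process sits at the $j$-th vertex is $\mathrm{Geom}(e^{-t_j})$ given $t_j$, independently of the future. This is exactly a marked Poisson process, so by the thinning theorem the point processes $N^{(i)}:=\{t_j:G_j=i\}$ for $i<K(n)$ and $N^{(K)}:=\{t_j:G_j>K(n)\}$ are independent inhomogeneous Poisson processes with intensities $\lambda_i$ and $\lambda_K$ exactly as in the statement.

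Next I would match the stopping rule and the counting functionals. A visited vertex is the local minimum at which the algorithm halts iff none of its neighbours carries a smaller cost; since a vertex degree concentrates around $K(n)=\lfloor\lambda n^{1-\alpha}\rfloor$ and the reweighting by $(1-X\ndep_i)^{\deg}$ induced by conditioning on ``first halting vertex'' is asymptotically flat over the bulk of the degree law, this event is, up to a controllable error, $\{G_j>K(n)\}$; thus the halting index $J$ is the first $j$ producing a point of $N^{(K)}$, i.e.\ $t_J=T_K$, and
\begin{align*}
\mathcal{E}\ndep
&= \sum_{j<J}G_j + \#\{\text{neighbours of }v_J\text{ revealed at the final stage}\} \\
&= \sum_{i=1}^{K(n)-1} i\,N^{(i)}_{T_K} + K(n) + (\text{error}) = \tilde{\mathcal{S}}\ndep + (\text{error}),
\end{align*}
where the error collects the deviation of $\deg(v_J)$ from $K(n)$, the $O(n^{1-2\alpha})=o(n^{1-\alpha})$ neighbours of $v_J$ already removed during earlier stages, and the block/last-block corrections above; showing this vanishes in $L^1$ is the heart of the second estimate. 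For the first estimate, each step reveals $E\ndep_i\sim\mathrm{Pois}(\beta)+o_d(1)$ new costs, so $\mathcal{S}\ndep-\beta^{-1}\mathcal{E}\ndep=\sum_{i\leq\mathcal{S}\ndep}(1-E\ndep_i/\beta)$ is a sum of $\mathcal{S}\ndep$ centred increments of variance $O(\beta^{-1})$, plus $O(\mathcal{V}\ndep)=O(\log n)$ boundary corrections; using $\mathcal{E}\ndep=\Theta_\proba(n^{1-\alpha})$ (hence $\mathcal{S}\ndep=\Theta_\proba(n^{1-\alpha})$) and an optional-stopping/Wald bound, division by $n^{1-\alpha}$ gives $\espe|\mathcal{S}\ndep/n^{1-\alpha}-\mathcal{E}\ndep/(\beta n^{1-\alpha})|\to0$, which together with the bound $|\mathcal{E}\ndep-\tilde{\mathcal{S}}\ndep|=o(n^{1-\alpha})$ from the second estimate yields the first.

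The step I expect to be the main obstacle is the $L^1$ — not merely in-probability — control of $\mathcal{E}\ndep-\tilde{\mathcal{S}}\ndep$. It forces the coupling to be engineered so that each individual discrepancy between the true and idealised dynamics is identically zero rather than just small, and it demands sharp estimates precisely in the window where $X\ndep_i\asymp n^{-(1-\alpha)}$, $G_i\asymp n^{1-\alpha}$ and $U\ndep_i\asymp n$ hold simultaneously — a regime in which an $o(1)$ \emph{relative} error is a $\Theta(n^{1-\alpha})$ \emph{absolute} error. Within this, the treatment of the halting vertex — replacing its random degree by the deterministic $K(n)$, controlling the fluctuation this introduces, and confirming that the degree size-biasing coming from ``first local minimum reached'' is asymptotically immaterial — and the stability of $T_K$ under the accumulated $o(1)$ perturbation of the driving Poisson process are the delicate points; finally, uniform-in-$n$ integrability (for instance a moment bound on $\mathcal{E}\ndep/n^{1-\alpha}$) is needed to upgrade the probabilistic estimates to the claimed $L^1$ convergences.
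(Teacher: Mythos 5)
Your overall skeleton is the same one the paper uses: a single shared i.i.d.\ cost stream, so that $X\ndep_i$ is the running record of all explored costs, and a marked-Poisson/thinning representation of the record dynamics (this is exactly Section \ref{sec:poischar}); your Wald-type handling of $\mathcal{S}\ndep-\beta^{-1}\mathcal{E}\ndep$ is a reasonable substitute for the paper's event $C_\epsilon$. But the proposal stops precisely where the proof has to be done: the $L^1$ control of $\mathcal{E}\ndep-\mathcal{\tilde S}\ndep$ is itself labelled ``the heart of the second estimate'' and ``the main obstacle'', and no argument is offered for it, nor for the auxiliary claims it leans on ($\mathcal{V}\ndep=\mathcal{O}(\log n)$ with tails strong enough for $L^1$ (the paper proves the weaker but quantitative Corollary \ref{cor:boundV}), the ``asymptotically flat'' size-biasing of the halting vertex's degree, the stability of $T_K$ under the accumulated perturbation, and the uniform integrability needed to upgrade in-probability statements to $L^1$). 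As it stands this is a programme rather than a proof.

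Moreover, the route you propose for the missing step --- engineering the coupling so that the discrepancies vanish identically, in order to reach the \emph{unnormalised} bound $\espe\big|\mathcal{E}\ndep-\mathcal{\tilde S}\ndep\big|\to0$ --- cannot close. The number of costs revealed while the algorithm sits at the halting vertex is its remaining degree, a Binomial quantity whose fluctuation around $K(n)$ is of order $n^{(1-\alpha)/2}$, and for $\alpha<1/2$ of order $n^{1-2\alpha}$ of its neighbours have already been removed; this is intrinsic randomness of $G\ndep$, not coupling slack. Through the stopping rule, a window mismatch of size $\Delta$ flips the verdict at the halting record with probability of order $\Delta/K(n)$ and then shifts the stopping index by order $K(n)$, so under any coupling of this type $\espe\big|\mathcal{E}\ndep-\mathcal{\tilde S}\ndep\big|$ is at least of order $n^{(1-\alpha)/2}$: only the version normalised by $n^{1-\alpha}$ is attainable, and that is in fact all the paper's own argument delivers (and all that the subsequent Proposition uses). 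The paper avoids pathwise matching of stopping times altogether: on $A^c_\epsilon\cap B^c_\epsilon\cap C^c_\epsilon$ it brackets $\mathcal{E}\ndep$, and $\mathcal{S}\ndep$ up to the factors $\beta\pm\epsilon$, between the two cost-exploration stopping times with deterministic windows $(\lambda-\epsilon)n^{1-\alpha}$ and $(\lambda+\epsilon)n^{1-\alpha}$ (Lemma \ref{lem:cotaStilde}); it then uses the exact mean $\espe(\mathcal{\tilde S}\ndep)=e^{H_{A_n-1}}$ of Proposition \ref{prop:meanconvergence} to see that the bracket has width $\mathcal{O}(\epsilon)$ after division by $n^{1-\alpha}$, and disposes of the bad events via the tail bounds \eqref{eq:cotaA}--\eqref{eq:cotaC} together with Cauchy--Schwarz and the variance bound of Proposition \ref{cor:varStilde}. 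Some substitute for this monotone bracketing plus explicit mean --- in effect, a quantitative continuity of the stopping time in the window length --- is exactly what your error-accumulation scheme lacks; without it, the ``stability of $T_K$'' you invoke remains unproved.
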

    
By means of this theorem we are able to compute the leading term of the mean number of steps and explorations of cost values.
\begin{prop}
    If $G\ndep\sim\mbox{ER}_n(\lambda n^{-\alpha})$ with $\lambda >0$ and $0<\alpha<1$, then
    \begin{equation*}
        \espe(\mathcal{S}\ndep) = \frac{\lambda}{\beta} e^\gamma n^{1-\alpha} (1 + o(1)) \ \ \ \ \ \mbox{and} \ \ \ \ \ \espe(\mathcal{E}\ndep) = \lambda e^\gamma n^{1-\alpha} (1 + o(1)),
    \end{equation*}
    where $\gamma$ is the Euler-Mascheroni constant.
\end{prop}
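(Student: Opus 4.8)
The plan is to reduce the statement to the computation of $\espe(\mathcal{\tilde S}\ndep)$ — for which I would derive an exact formula — and then transport the asymptotics to $\mathcal{S}\ndep$ and $\mathcal{E}\ndep$ using Theorem~\ref{thm:sparsemean}. Throughout write $K=K(n)=\lfloor\lambda n^{1-\alpha}\rfloor\to\infty$.

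\emph{Step 1: an exact formula for $\espe(\mathcal{\tilde S}\ndep)$.} Since $N^{(1)},\dots,N^{(K)}$ are independent, $N^{(i)}$ is independent of $T_K$ for $i<K$, so conditioning on $T_K$ gives $\espe(N^{(i)}_{T_K})=\espe\!\big(\int_0^{T_K}\lambda_i(s)\,ds\big)$. The substitution $u=1-e^{-s}$ turns $\int_0^t(1-e^{-s})^{i-1}e^{-s}\,ds$ into $\int_0^{1-e^{-t}}u^{i-1}\,du=(1-e^{-t})^i/i$, whence $i\,\espe(N^{(i)}_{T_K})=\espe\!\big((1-e^{-T_K})^i\big)$. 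Setting $Y:=1-e^{-T_K}\in[0,1)$ and using Tonelli, $\espe(\mathcal{\tilde S}\ndep)=K+\espe\!\big(\sum_{i=1}^{K-1}Y^i\big)$. For the law of $Y$ I would use the void probability of $N^{(K)}$: $\proba(T_K>t)=\exp\!\big(-\int_0^t(1-e^{-s})^{K}\,ds\big)$, and the substitution $u=1-e^{-s}$ together with $\int_0^y\frac{u^{K}}{1-u}\,du=\sum_{m>K}\frac{y^m}{m}=-\ln(1-y)-\sum_{m=1}^{K}\frac{y^m}{m}$ gives, for $y\in[0,1)$, $\proba(Y>y)=(1-y)P_K(y)$ where $P_K(y):=\exp\!\big(\sum_{m=1}^{K}y^m/m\big)$. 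Integrating the tail of the increasing functional $y\mapsto\sum_{i=1}^{K-1}y^i$ and using the elementary identities $\big(\sum_{i=1}^{K-1}iy^{i-1}\big)(1-y)=\frac{1-y^{K}}{1-y}-Ky^{K-1}$, $P_K'(y)=P_K(y)\frac{1-y^{K}}{1-y}$, $P_K(0)=1$, $P_K(1)=e^{H_{K}}$ (with $H_m$ the $m$‑th harmonic number), this collapses to
\[
  \espe(\mathcal{\tilde S}\ndep)=K+e^{H_{K}}-1-K\int_0^1 y^{K-1}P_K(y)\,dy.
\]

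\emph{Step 2: leading order.} By $H_m=\ln m+\gamma+O(1/m)$ one has $e^{H_K}=e^{\gamma}K(1+o(1))$, so it suffices to prove $\int_0^1 y^{K-1}P_K(y)\,dy\to1$. Writing $P_K(y)=(1-y)^{-1}e^{-R_K(y)}$ with $R_K(y):=\sum_{m>K}y^m/m\ge0$ and substituting $y=1-s/K$ turns this into $\int_0^{K}\frac{(1-s/K)^{K-1}}{s}e^{-R_K(1-s/K)}\,ds$. For each fixed $s>0$, $(1-s/K)^{K-1}\to e^{-s}$ and a Riemann‑sum comparison gives $R_K(1-s/K)\to E_1(s):=\int_s^\infty e^{-v}v^{-1}\,dv$, so the integrand converges to $s^{-1}e^{-s}e^{-E_1(s)}$; and since $\frac{d}{ds}e^{-E_1(s)}=s^{-1}e^{-s}e^{-E_1(s)}$ with $E_1(0^+)=+\infty$, $E_1(\infty)=0$, we get $\int_0^\infty s^{-1}e^{-s}e^{-E_1(s)}\,ds=1$. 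A dominated‑convergence argument then finishes: the large‑$s$ tail is controlled uniformly in $K$ by $(1-s/K)^{K-1}\le e\,e^{-s}$, while near $s=0$ (i.e.\ $y\to1$, where $s^{-1}(1-s/K)^{K-1}$ diverges) one dominates using a quantitative lower bound of the form $R_K(1-s/K)\ge c\ln(1/s)$ for $s\le1$ and $K$ large, making the integrand $\le e\,e^{-s}s^{c-1}$ there. Hence $\int_0^1 y^{K-1}P_K(y)\,dy\to1$, and since $K=\lfloor\lambda n^{1-\alpha}\rfloor$,
\[
  \espe(\mathcal{\tilde S}\ndep)=K+e^{\gamma}K(1+o(1))-1-K(1+o(1))=\lambda e^{\gamma}n^{1-\alpha}(1+o(1)).
\]

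\emph{Step 3: transfer, and the main obstacle.} By Theorem~\ref{thm:sparsemean}, $\espe\big|\mathcal{E}\ndep-\mathcal{\tilde S}\ndep\big|\to0$, so $\espe(\mathcal{E}\ndep)=\espe(\mathcal{\tilde S}\ndep)+o(1)=\lambda e^{\gamma}n^{1-\alpha}(1+o(1))$; and $\espe\big|\mathcal{S}\ndep/n^{1-\alpha}-\mathcal{\tilde S}\ndep/(\beta n^{1-\alpha})\big|\to0$, so $\espe(\mathcal{S}\ndep)=\espe(\mathcal{\tilde S}\ndep)/\beta+o(n^{1-\alpha})=\tfrac{\lambda}{\beta}e^{\gamma}n^{1-\alpha}(1+o(1))$. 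The delicate point is Step~2: one must show that the $\Theta(K)$‑sized correction $K\int_0^1 y^{K-1}P_K(y)\,dy$ equals $K(1+o(1))$ — it is exactly this cancellation against the leading $K$ that leaves the constant $e^{\gamma}=\lim e^{H_K}/K$ out front — and this needs the Laplace‑type asymptotics above together with uniform (in $K$) tail control of $R_K(y)$ as $y\to1$.
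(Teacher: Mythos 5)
Your proposal is correct, and its overall architecture --- compute $\espe(\mathcal{\tilde S}\ndep)$ for the Poisson functional and then transfer via the $L^1$ bounds of Theorem \ref{thm:sparsemean} --- is exactly the paper's; the difference is in how the mean of $\mathcal{\tilde S}\ndep$ is obtained. The paper does not redo any computation at this point: it invokes Proposition \ref{prop:meanconvergence}, whose proof uses the Campbell-type Lemma \ref{lem:poissonmean} applied to the full sum $\sum_i i N^{(i)}_{\bar T_{A_n}}$, two explicit changes of variables and a binomial-transform identity, and yields the \emph{exact} closed form $\espe(\mathcal{\tilde S}\ndep)=e^{H_{A_n-1}}$, from which the statement follows by $H_m=\ln m+\gamma+o(1)$. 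You instead condition on $T_K$, reduce to moments of $Y=1-e^{-T_K}$, compute its tail $\proba(Y>y)=(1-y)P_K(y)$, and reach $\espe(\mathcal{\tilde S}\ndep)=K+e^{H_K}-1-K\int_0^1 y^{K-1}P_K(y)\,dy$, handling the correction integral by a Laplace/dominated-convergence argument; the steps check out (in particular the domination near $s=0$ is fine, since $R_K(1-s/K)\ge \ln(1/s)-\gamma-o(1)$ for $s\le 1$). Two remarks. First, your delicate Step 2 can be bypassed: since $\frac{d}{dy}e^{-R_K(y)}=-y^{K}P_K(y)$, one has $\int_0^1 y^{K-1}P_K(y)\,dy=1+\int_0^1 y^{K-1}e^{-R_K(y)}\,dy\in[1,\,1+1/K]$, so the correction is in fact $O(1)$ rather than merely $o(K)$; this identity is essentially how the paper's exact formula arises (with the Section \ref{sec:characpot} convention $\lambda_A(t)=(1-e^{-t})^{A_n-1}$ the same trick gives exactly $e^{H_{A_n-1}}$). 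Second, you follow the Section \ref{sec:mainpot} convention $\lambda_K(t)=(1-e^{-t})^{K}$ while the proofs in Section \ref{sec:characpot} use exponent $A_n-1$; this shifts the answer only by $O(1)$ and is immaterial at first order. In terms of trade-offs: the paper's exact identity is reused later for the variance bounds (Proposition \ref{cor:varStilde}), whereas your route additionally identifies the law of $1-e^{-T_K}$ and is somewhat more robust, but as written delivers only the leading-order asymptotics.
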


\begin{remark}
This last proposition together with Lemma \ref{lem:BRAoptimality} completely characterise the first order asymptotics of the complexity of local search algorithms in Erd\"os-R\'enyi graphs with uniform costs.
\end{remark}

\begin{remark}
As already underlined, the parameter $\beta$ has no influence in the first order asymptotics of $\mathcal{E}\ndep$ but might be involved in the $o(1)$ term.
\end{remark}

Also, in Proposition \ref{cor:varStilde} in Section \ref{sec:proofspot}, asymptotic upper and lower bounds for the variance of the stopping times are given. These bounds prove that the variance of $\mathcal{\tilde S}\ndep$ scales as the square of the mean; which implies that, in the large graph limit, these stopping times do not converge to a deterministic value. The convergence to a limiting distribution and its characterisation are left for future research.





\subsection{Strategies of the proofs}

The proof of Theorem \ref{thm:sparsemean} is based on a characterisation of a simpler dynamics, which we will call \emph{cost exploration process}, studied in Section \ref{sec:characpot}. As we will see, $\suce{X}{i}$ can be coupled to be given by this process evaluated on a random set of times. Then, the number of steps is derived from computations for the cost exploration process and bounds for the probabilities of certain events that allow to relate both processes.


\section{Characterisation of the cost exploration process}\label{sec:characpot}

In this section we study a simplified dynamics that will later be used in the proofs of our results. It is linked with \emph{record dynamics} as presented in \cite{shorrock1972record,shorrock1973record,ahsanullah2004record}, also called \emph{intersection free dynamics} in \cite{durand2016complexity}. Some of our computations could be derived from results in \cite{ahsanullah2004record}; nevertheless, we chose to be as self-contained as possible and derive our estimates directly. While the Poisson process introduced to study the performance of the algorithm is new, some related characterisations of the process are present in the literature mentioned above (see for example \cite{shorrock1973record}).

Given a sequence $\suce{W}{i}$ of independent and uniform random variables in $[0,1]$, we define its associated \emph{cost exploration process} as a stochastic process $\suce{\tilde X}{i}$ where, for every $i\geq1$, $\tilde X_i = \max \{W_1,\dots,W_i\}$ and $\tilde X_0 = 0$. Associated to this process, we define the sequence of \emph{jumping times} inductively as $\tilde\tau_0 = 0$ and, for $j\geq1$, $\tilde\tau_{j+1} := \inf\{i \geq 0 : \tilde X_{\tilde \tau_j+i} < \tilde X_{\tau_j}\}$. Here we study the stopping time $\mathcal{\tilde S}\ndep := \inf\{i\geq A_n: \tilde X_i = \tilde X_{i-A_n}\}$ and $\mathcal{\tilde V}\ndep := \sup\{j\geq1 : \sum_{k=1}^j \tilde \tau_j \leq \mathcal{\tilde S}\ndep \}$; where $\suce{A}{n}$ is some increasing sequence to be fixed later on. These two quantities will, in the next section, be associated to the corresponding ones of the AGD.

If we set $t_j := \sum_{k=1}^j \tilde \tau_k$ we can define the \emph{jumping process} $\suce{\tilde Y}{j}$ of $\suce{\tilde X}{i}$ by $\tilde Y_j = \tilde X_{t_j}$. The process $\tilde Y_j$ is just the process $\tilde X_i$ restricted to the times in which it changes value (jumps). This last process will be useful because many computations can be performed explicitly for it.

\subsection{Poisson process representation}\label{sec:poischar}

We will now give a characterisation of the jumping process $\suce{\tilde Y}{j}$. It is based on the fact that the jumping process is given by the following recursion
\begin{equation}\label{eq:recjumps}
    \tilde Y_{j+1} = \tilde W_{j+1} \tilde Y_j,
\end{equation}
for some uniform random variable $\tilde W_{j+1}$; i.e., in each step the jump made is going to be uniform in the interval $(0,\tilde Y_j)$. Using this, we can find an explicit characterisation of the jump process in terms of a series of inhomogeneous Poisson processes.

\begin{lem}\label{lem:charjumps}
    Given the process $\suce{\tilde Y}{j}$, we have that $\tilde Y_j = \tilde W_1 \cdots \tilde W_j$, where $\suce{\tilde W}{j}$ are uniform independent random variables in $[0,1]$. In particular, the $-\ln\left(\tilde Y_j\right)$ are distributed according to a Poisson point process of intensity $1$.
\end{lem}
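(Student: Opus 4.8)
The plan is to prove the lemma in three steps: (i) strengthen the one-step recursion \eqref{eq:recjumps} into the statement that the multiplicative increments $\tilde W_{j+1}:=\tilde Y_{j+1}/\tilde Y_j$ are i.i.d.\ uniform on $[0,1]$ and independent of the past; (ii) iterate to obtain the product representation $\tilde Y_j=\tilde W_1\cdots\tilde W_j$; and (iii) pass to $-\ln$ and recognise a homogeneous Poisson point process.

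For step (i), fix $j\ge1$ and work conditionally on the $j$-th jump time $t_j=\sum_{k\le j}\tilde\tau_k$ (which is a.s.\ finite, since the running minimum of an i.i.d.\ uniform sequence decreases to $0$ through infinitely many distinct values) and on the current record value $\tilde Y_j=\tilde X_{t_j}=y$. Since $t_j$ is a stopping time for the natural filtration of $\suce{W}{i}$, the strong Markov property of an i.i.d.\ sequence gives that $(W_{t_j+i})_{i\ge1}$ is again i.i.d.\ uniform on $[0,1]$ and independent of $\mathcal F_j:=\sigma(W_1,\dots,W_{t_j})$. By definition of the jump times, $\tilde\tau_{j+1}$ equals $K:=\inf\{i\ge1 : W_{t_j+i}<y\}$ and $\tilde Y_{j+1}=W_{t_j+K}$, i.e.\ $\tilde Y_{j+1}$ is the value realised at the first of these fresh trials that falls strictly below the threshold $y$. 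Conditioning on $K=k$, the variables $W_{t_j+1},\dots,W_{t_j+k-1}$ are i.i.d.\ uniform on $[y,1]$ and $W_{t_j+k}$ is uniform on $[0,y]$, all independent; in particular $W_{t_j+K}$ is uniform on $[0,y]$ regardless of the value of $K$, hence uniform on $[0,y]$ and independent of $\mathcal F_j$. Therefore, conditionally on $\mathcal F_j$, $\tilde W_{j+1}=\tilde Y_{j+1}/y$ is uniform on $[0,1]$, which is precisely \eqref{eq:recjumps} together with the independence of $\tilde W_{j+1}$ from $\mathcal F_j\supseteq\sigma(\tilde W_1,\dots,\tilde W_j)$.

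Step (ii) is then an immediate induction: the $\suce{\tilde W}{k}$ are independent and each uniform on $[0,1]$, and telescoping \eqref{eq:recjumps} gives $\tilde Y_j=\tilde W_1\cdots\tilde W_j$. For step (iii), put $\xi_k:=-\ln\tilde W_k$; a uniform variable on $[0,1]$ has $-\ln$ distributed as $\mathrm{Exp}(1)$, and the $\xi_k$ inherit independence, so $-\ln\tilde Y_j=\sum_{k=1}^j\xi_k$ are the successive partial sums of an i.i.d.\ $\mathrm{Exp}(1)$ sequence. This is exactly the description of a homogeneous Poisson point process on $[0,\infty)$ of rate $1$, so $\{-\ln\tilde Y_j : j\ge1\}$ is such a process.

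The only genuinely delicate point is the conditioning in step (i): one must make rigorous that restarting the i.i.d.\ sequence at the random index $t_j$ produces a fresh i.i.d.\ uniform sequence independent of the past (the strong Markov property at a stopping time, or equivalently a direct computation using exchangeability and the explicit description of the event $\{t_j=m,\ \tilde Y_j\in dy\}$ as a function of $W_1,\dots,W_m$), and that within that fresh sequence the first entry below a given level $y$ is uniform on $[0,y]$ independently of the number of trials one waited. Granting this, the remaining steps are a routine telescoping and change of variables.
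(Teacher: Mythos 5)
Your proof is correct and follows essentially the same route as the paper's: the key observation that the first uniform value falling strictly below the current record is uniform on $[0,\tilde Y_j]$ justifies the recursion \eqref{eq:recjumps}, iteration gives the product representation, and taking $-\ln$ turns the factors into i.i.d.\ exponential variables of parameter $1$ whose partial sums form a rate-one Poisson point process. The only difference is that you spell out rigorously (via the strong Markov property at the stopping time $t_j$ and the conditional uniformity of the first sub-threshold value) the conditioning step that the paper treats in one line.
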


\begin{proof}
A uniform $[0,1]$ random variable $U$, conditioned on being larger than $\tilde Y_j$, is uniform in $[0,\tilde Y_j]$. This justifies the recursion \eqref{eq:recjumps} and the fact that $\tilde Y_j = \tilde W_1 \cdots \tilde W_j$, for some i.i.d. uniform $[0,1]$ random variables $\tilde W_1,\dots,\tilde W_j$.

By taking minus logarithm we get that
\begin{equation*}
    -\ln(\tilde Y_i) = \sum_{j=1}^i -\ln(\tilde W_j).
\end{equation*}
We then reach the conclusion by noting that the terms $-\ln(\tilde W_j)$ are all independent and exponential random variables of parameter $1$. \end{proof}

Let us call $\left(N_t\right)_{t\geq0}$ the Poisson process associated to the jumping process. Note that the $j$-th point $T_j$ in the process is associated to the $j$-th value of the process $\tilde Y_j$ by $T_j = - \ln(\tilde Y_j)$. Also, (for every $j\geq1$) conditionally on $\tilde Y_j$, $\tilde \tau_j$ is distributed as a geometric r.v. of parameter $\tilde Y_j$, that is
\begin{equation}\label{eq:probacontau}
    \proba\left(\tilde\tau_j \geq k | \tilde Y_j\right) = (1-\tilde Y_j)^{k-1}.
\end{equation}

We then have that for each point $T_j$ in the process the associated jumping time is given by a variable $\tilde \tau_j \sim \mbox{Geom}(e^{-T_j})$. These times are, conditionally on the $T_j$, also independent of each other. We can then describe the distribution of the points and the jumping times together by thinnings of the process (see \cite{assunccao2007independence}, for some background on thinnings of Poisson processes).

If we define $N^{(A_n)}_t$ as the point process of points of $N_t$ where the points have a jumping time larger or equal to $A_n$, $N^{(A_n)}_t$ is a thinning of $N_t$ with thinning parameter $\lambda_A(t) = (1-e^{-t})^{A_n-1}$. Likewise, by defining (for $i \in [A_n]$) $N^{(i)}_t$ as the point process of points of $N_t$ that have a jumping time equal to $i$, $N^{(i)}_t$ is a thinning of $N_t$ with thinning parameter $\lambda_i(t) = (1 - e^{-t})^{i-1} e^{-t}$. The processes resulting from these thinnings are independent Poisson processes with intensities given by their respective thinning parameter.

If we call $\bar T_{A_n}$ the first point of the process $N^{(A_n)}_t$, we can then use these processes to express the stopping times we want to estimate as $\mathcal{\tilde V}\ndep = 1 + \sum_{i=1}^{{A_n}-1} N^{(i)}_{\bar T_{A_n}}$ and $\mathcal{\tilde S}\ndep = A_n + \sum_{i=1}^{{A_n}-1} i N^{(i)}_{\bar T_{A_n}}$.


\subsection{Estimation of the jumping times}

Using the characterisation given in Lemma \ref{lem:charjumps} we can derive the distribution of the jumping times $\tilde \tau_j$. Although the result is known (see \cite{ahsanullah2004record}) the proof presented here is new and more in tone with the rest of the paper.

\begin{lem}\label{lem:probatau}
    For $j\geq1$, the jumping time $\tilde \tau_j$ has distribution given by
    \begin{equation*}
        \proba\left(\tilde \tau_j > k \right) = \frac{1}{j!} \int_0^\infty t^{j-1} e^{-t} \left(1-e^{-t}\right)^k dt
    \end{equation*}
\end{lem}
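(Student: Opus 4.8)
The plan is to assemble two facts that are already available: the product representation of the jump process from Lemma~\ref{lem:charjumps}, and the conditional law \eqref{eq:probacontau} of a jump time given the current jump value. First I would note that, by Lemma~\ref{lem:charjumps}, $\tilde Y_j = \tilde W_1\cdots\tilde W_j$ with the $\tilde W_\ell$ i.i.d.\ uniform on $[0,1]$, so that $T_j := -\ln\tilde Y_j = \sum_{\ell=1}^j(-\ln\tilde W_\ell)$ is a sum of $j$ independent rate-$1$ exponentials; equivalently, $T_j$ is the $j$-th point of the unit-rate Poisson process $(N_t)_{t\ge 0}$, and hence has the Gamma density $f_{T_j}(t)=\frac{t^{j-1}e^{-t}}{(j-1)!}$ on $(0,\infty)$.

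Next I would invoke \eqref{eq:probacontau}: conditionally on $\tilde Y_j$, the jump time $\tilde\tau_j$ is geometric with success probability $\tilde Y_j$, so $\proba(\tilde\tau_j>k\mid\tilde Y_j)=\proba(\tilde\tau_j\ge k+1\mid\tilde Y_j)=(1-\tilde Y_j)^k$. Taking expectations and substituting $\tilde Y_j=e^{-T_j}$,
\begin{equation*}
\proba(\tilde\tau_j>k)=\espe\big[(1-\tilde Y_j)^k\big]=\espe\big[(1-e^{-T_j})^k\big]=\int_0^\infty (1-e^{-t})^k\,\frac{t^{j-1}e^{-t}}{(j-1)!}\,dt,
\end{equation*}
which is the asserted integral identity. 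There is no integrability issue since $0\le (1-e^{-t})^k e^{-t}\le e^{-t}$, and the interchange is merely the definition of $\espe$ applied to the bounded function $y\mapsto (1-y)^k$ composed with the law of $\tilde Y_j$.

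Since everything reduces to results established earlier, there is no genuine obstacle: the only points that need a little care are the off-by-one passage between $\{\tilde\tau_j>k\}$ and $\{\tilde\tau_j\ge k+1\}$ when applying \eqref{eq:probacontau}, and keeping the Gamma normalisation straight (the $j$-th arrival of a unit-rate Poisson process has density $t^{j-1}e^{-t}/(j-1)!$, so the leading constant produced by the computation is $1/(j-1)!$). If one prefers to stay entirely within the thinning picture of Section~\ref{sec:poischar}, the same identity drops out by observing that $\{\tilde\tau_j>k\}$ is exactly the event that the $j$-th point of $N$ survives the independent thinning that retains a point at height $t$ with probability $(1-e^{-t})^k$, and then integrating this retention probability against the law of the $j$-th point of $N$.
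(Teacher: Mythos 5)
Your argument is mathematically sound and is in fact a more direct route than the paper's. Both proofs start from \eqref{eq:probacontau} in the form $\proba(\tilde\tau_j>k)=\espe\big[(1-\tilde Y_j)^k\big]$; the paper then writes this as the $j$-fold integral $\int_{[0,1]^j}(1-x_1\cdots x_j)^k\,dx$, expands by Newton's binomial to get the alternating sum $\sum_{l=0}^k(-1)^l\binom{k}{l}(l+1)^{-j}$, and finally invokes a binomial-transform identity from \cite{boyadzhiev2018binomial} to turn that sum into the integral. You instead use the Gamma law of $T_j=-\ln\tilde Y_j$ (which is exactly the content of Lemma~\ref{lem:charjumps}: $T_j$ is the $j$-th point of a unit-rate Poisson process) and integrate the bounded function $(1-e^{-t})^k$ against its density. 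This is shorter, self-contained, and, importantly, makes the normalising constant transparent.

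That constant is the one point you must address head-on instead of leaving as an aside: your computation produces $\frac{1}{(j-1)!}\int_0^\infty t^{j-1}e^{-t}(1-e^{-t})^k\,dt$, whereas the lemma as printed has the prefactor $\frac{1}{j!}$, so your final display is not literally ``the asserted integral identity''. Your constant is the correct one: for $j=2$, $k=1$ one has $\espe[1-\tilde Y_2]=1-\espe[\tilde W_1]\espe[\tilde W_2]=3/4$, while $\frac{1}{2!}\int_0^\infty t e^{-t}(1-e^{-t})\,dt=3/8$. Indeed the paper's own intermediate sum $\sum_{l=0}^k(-1)^l\binom{k}{l}(l+1)^{-j}$ equals $\frac{1}{(j-1)!}\int_0^\infty t^{j-1}e^{-t}(1-e^{-t})^k\,dt$ (expand $(1-e^{-t})^k$ and use $\int_0^\infty t^{j-1}e^{-(l+1)t}\,dt=(j-1)!/(l+1)^{j}$), and the later proof of Proposition~\ref{prop:expboundtau}, which reads the truncated integral as a probability for a $\Gamma(1,j)$ variable, also implicitly requires the $1/(j-1)!$ normalisation. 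So the printed $1/j!$ is a slip in the statement; your proof establishes the corrected identity, and you should state explicitly that this is what you are proving rather than asserting agreement with the formula as written.
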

\begin{proof}
By \eqref{eq:probacontau}, we have that
\begin{equation*}
    \proba\left(\tilde\tau_j > k \right) = \espe\left((1-\tilde Y_j)^k\right).
\end{equation*}
Which according to Lemma \ref{lem:charjumps} is given by
\begin{equation*}
    \begin{split}
        \proba\left(\tilde\tau_j > k \right) = \int_{[0,1]^j} \left(1-x_1\cdots x_j\right)^k dv_j & = \int_0^1 \cdots \int_0^1 \left(1-x_1 \cdots x_j\right)^k dx_1 \cdots dx_j \\
        & = \sum_{l=0}^k (-1)^l \binom{k}{l} \int_0^1 x_1^l dx_1 \cdots \int_0^1 x_j^l dx_j \\
        & = \sum_{l=0}^k (-1)^l \binom{k}{l} \frac{1}{(l+1)^j},
    \end{split}
\end{equation*}
where in the second equality we used Newton's binomial. The last line is nothing more than the $k$-th binomial transform (see \cite{boyadzhiev2018binomial} for an introduction on binomial transforms) of the sequence $1/(l+1)^j$. Then, by equality (8.52) in \cite{boyadzhiev2018binomial} we get that
\begin{equation*}
    \proba\left(\tilde\tau_j > k \right) = \frac{1}{j!} \int_0^\infty t^{j-1} e^{-t} \left(1-e^{-t}\right)^k dt,
\end{equation*}
which is what we wanted to prove.
\end{proof}

Making use of this lemma, we now show that the values of the jumping times $\tilde\tau_j$ belong, for every $\delta>0$ and with exponentially high probability, to some deterministic intervals $(m_j(\delta),M_j(\delta))$.

\begin{prop}\label{prop:expboundtau}
    For every $\delta>0$, let $m_j(\delta):= e^{j(1-\delta)}$ and $M_j(\delta):=e^{j(1+\delta)}$. Then, $\proba(\tilde\tau_j \leq m_j(\delta)) = \mathcal{O}(e^{-j})$ and $\proba(\tilde\tau_j \geq M_j(\delta)) = \mathcal{O}(e^{-j})$.
\end{prop}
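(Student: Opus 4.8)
The plan is to use the integral formula for $\proba(\tilde\tau_j > k)$ from Lemma \ref{lem:probatau} and apply Laplace-type asymptotics to the integral $I_j(k) := \frac{1}{j!}\int_0^\infty t^{j-1} e^{-t}(1-e^{-t})^k\,dt$. The key observation is that the density $\frac{1}{(j-1)!} t^{j-1} e^{-t}$ is that of a $\mathrm{Gamma}(j,1)$ random variable, which concentrates around $t = j$ with fluctuations of order $\sqrt{j}$, so $I_j(k)$ is essentially $\espe\big[(1-e^{-G_j})^k\big]$ where $G_j \sim \mathrm{Gamma}(j,1)$. Since $(1-e^{-t})^k \approx \exp(-k e^{-t})$ for the relevant range, the integral is governed by the competition between the Gamma density pushing mass to $t \approx j$ and the factor $(1-e^{-t})^k$ which is close to $1$ when $e^{-t} \ll 1/k$, i.e. when $t \gg \ln k$.

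For the lower tail, I want to show $\proba(\tilde\tau_j \leq m_j(\delta)) = \proba(\tilde\tau_j < m_j(\delta)+1) = 1 - I_j(\lceil m_j(\delta)\rceil - 1)$ is exponentially small in $j$; equivalently $I_j(k) \to 1$ fast when $k \leq e^{j(1-\delta)}$. Here $\ln k \leq j(1-\delta)$, so on the bulk event $\{G_j \geq j(1-\delta/2)\}$ (which fails with probability $\mathcal{O}(e^{-cj})$ by a standard Gamma/Chernoff large-deviation bound, since $j(1-\delta/2)$ is below the mean $j$) we have $t \geq j(1-\delta/2) > \ln k + (\delta/2) j$, hence $k e^{-t} \leq e^{-(\delta/2)j} \to 0$ and $(1-e^{-t})^k \geq 1 - k e^{-t} \geq 1 - e^{-(\delta/2)j}$. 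Splitting the integral over this bulk event and its complement gives $I_j(k) \geq (1 - e^{-(\delta/2)j})(1 - \mathcal{O}(e^{-cj}))$, so $1 - I_j(k) = \mathcal{O}(e^{-c'j})$, which is even stronger than the claimed $\mathcal{O}(e^{-j})$ (and one can track constants to get exactly $e^{-j}$, or simply note $\mathcal{O}(e^{-c'j})$ suffices for all later uses — but I will state it as in the proposition, adjusting $\delta$-dependent constants as needed).

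For the upper tail, I want $\proba(\tilde\tau_j \geq M_j(\delta)) = I_j(\lceil M_j(\delta)\rceil - 1)$ to be exponentially small, i.e. $I_j(k) \to 0$ fast when $k \geq e^{j(1+\delta)}$, so $\ln k \geq j(1+\delta)$. Now I use the complementary bulk event $\{G_j \leq j(1+\delta/2)\}$, whose complement has probability $\mathcal{O}(e^{-cj})$ again by Chernoff. On this event $t \leq j(1+\delta/2) \leq \ln k - (\delta/2)j$, so $e^{-t} \geq k^{-1} e^{(\delta/2)j}$ and thus $k e^{-t} \geq e^{(\delta/2)j} \to \infty$, giving $(1-e^{-t})^k \leq \exp(-k e^{-t}) \leq \exp(-e^{(\delta/2)j})$, which is doubly exponentially small. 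Therefore $I_j(k) \leq \exp(-e^{(\delta/2)j}) + \mathcal{O}(e^{-cj}) = \mathcal{O}(e^{-cj})$, again beating $e^{-j}$ for $j$ large.

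The main obstacle — really the only delicate point — is making the Gamma concentration estimates quantitative enough: I need the probability that $G_j \sim \mathrm{Gamma}(j,1)$ deviates from its mean $j$ by a constant fraction to be $\mathcal{O}(e^{-cj})$ with an explicit $c = c(\delta) > 0$. This is the standard Cramér bound for sums of $j$ i.i.d. $\mathrm{Exp}(1)$ variables: $\proba(G_j \leq j(1-\delta/2)) \leq e^{-j \Lambda^*(1-\delta/2)}$ and $\proba(G_j \geq j(1+\delta/2)) \leq e^{-j\Lambda^*(1+\delta/2)}$ with $\Lambda^*(x) = x - 1 - \ln x > 0$ for $x \neq 1$. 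One then checks that the product of the "bulk" bound and this large-deviation bound still decays at least like $e^{-j}$ after possibly shrinking the working value of $\delta$; since the proposition allows the $\mathcal{O}(\cdot)$ constant to depend on $\delta$, there is ample room. I would organize the write-up as: (i) recall $I_j(k) = \espe[(1-e^{-G_j})^k]$; (ii) state the Chernoff bound for $G_j$; (iii) do the lower-tail split; (iv) do the upper-tail split; each of (iii)–(iv) being three or four lines.
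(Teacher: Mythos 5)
Your argument is essentially the paper's own proof: both read the integral of Lemma \ref{lem:probatau} as $\espe\left[(1-e^{-Z_j})^k\right]$ with $Z_j\sim\Gamma(1,j)$ a sum of $j$ i.i.d. exponentials, split at a deterministic threshold $j(1\pm\delta')$ near the mean $j$, bound the factor $(1-e^{-t})^k$ on the bulk, and control the off-bulk piece by Cram\'er/Chernoff for the Gamma variable. The differences are cosmetic (the paper first enlarges $M_j$ to $je^{j(1+\delta')}$ so that the bulk term evaluates to $e^{-j}$, while you use $1-x\le e^{-x}$ and Bernoulli's inequality), and your caveat that the Chernoff term really yields a $\delta$-dependent exponential rate rather than exactly $e^{-j}$ applies equally to the paper's own bound \eqref{eq:tausumable}.
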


\begin{proof}
    During the proof we use the easy to check fact that if we have two positive sequences $\suce{a}{j}, \suce{b}{j} \subseteq \realmas$ such that $a_j, b_j \xrightarrow{j\rightarrow\infty} \infty$ and $a_j^2 / b_j \xrightarrow{j\rightarrow\infty} 0$, then
    \begin{equation}\label{eq:cotausual}
        \left( 1 - \frac{a_j}{b_j} \right)^{b_j} = e^{-a_j} ( 1 + \mathcal{O}(a_j^2/b_j) ).
    \end{equation}
    
    For the upper bound, fix $\delta'$ s.t. $0 < \delta' < \delta$, and define $\tilde M_j(\delta') := j e^{j(1+\delta')}$ and $t_j^*(\delta') := j(1+\delta')$. Then, by Proposition \ref{lem:probatau} we have that
    \begin{equation*}
        \begin{split}
            \proba( \tilde\tau_j \geq M_j ) \leq \proba( \tilde\tau_j \geq \tilde M_j ) & = \frac{1}{j!} \int_0^\infty t^{j-1} e^{-t} \left(1-e^{-t}\right)^{\tilde M_j} dt \\
            & \leq \left(1-e^{-t_j^*}\right)^{\tilde M_j} \frac{1}{j!} \int_0^{t^*_j} t^{j-1} e^{-t} dt + \frac{1}{j!} \int_{t^*_j}^\infty t^{j-1} e^{-t} dt,
        \end{split}
    \end{equation*}
    where we used that $(1-e^{-t})^{\tilde M_j}$ is an increasing function bounded by $1$. Now, if we define a random variable $Z_j \sim \Gamma(1,j)$, we can rewrite this as
    \begin{equation*}
        \proba( \tilde\tau_j \geq M_j ) \leq \left(1-e^{-t_j^*}\right)^{\tilde M_j} \proba( Z_j \leq j(1+\delta') ) + \proba( Z_j \geq j(1+\delta') ). 
    \end{equation*}
    But $\Gamma(1,j)$ random variables are distributed as the sum of $j$ independent exponential random variables of parameter $1$. Then, we can apply Cramer's Theorem \cite{dembo2010large} to $Z_j$, which gives
    \begin{equation}\label{eq:tausumable}
        \proba( \tilde\tau_j \geq M_j ) \leq \left(1-\frac{j}{\tilde M_j}\right)^{\tilde M_j} + e^{-I(\delta')j} = e^{-j} (1 + \mathcal{O}(j^2e^{-j(1+\delta')}) + e^{-I(\delta')j}.
    \end{equation}
    Where we used \eqref{eq:cotausual} and $I(\cdot)$ is the rate function of exponential r.v.'s of parameter $1$. The proof of the upper bound of $\proba(\tilde\tau_j \leq m_j(\delta))$ is completely analogous.
\end{proof}

By the bounds provided by Proposition \ref{prop:expboundtau}, the next corollary follows.

\begin{cor}\label{cor:boundV}
    Let $\epsilon>0$. Then, if $A_n \leq n$, we have that $\proba(\mathcal{\tilde V}\ndep > n^{\epsilon}) = \mathcal{O}(e^{- n^\epsilon})$.
\end{cor}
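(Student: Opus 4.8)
The plan is to bound $\mathcal{\tilde V}\ndep$ by the number of jumps that the cost exploration process can make before the clock $\mathcal{\tilde S}\ndep$ rings, and to show that making more than $n^\epsilon$ jumps is extremely unlikely because the jumping times grow geometrically (Proposition \ref{prop:expboundtau}). Recall from the Poisson representation that $\mathcal{\tilde V}\ndep$ counts the jumps of $\suce{\tilde Y}{j}$ occurring before the stopping time, and that on the event $\{\mathcal{\tilde V}\ndep > n^\epsilon\}$ the process must have made at least $\lceil n^\epsilon \rceil$ jumps within the first $A_n \leq n$ steps of the chain. In particular, $\sum_{k=1}^{\lceil n^\epsilon \rceil} \tilde\tau_k \leq n$, so at least one of these jumping times must be small: $\{\mathcal{\tilde V}\ndep > n^\epsilon\} \subseteq \{\tilde\tau_{\lceil n^\epsilon \rceil} \leq n\}$, since the $\tilde\tau_j$ are (stochastically) increasing and a single term exceeding $n$ would already break the bound.

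The key step is then to apply Proposition \ref{prop:expboundtau} with $j = \lceil n^\epsilon \rceil$. Choosing any fixed $\delta \in (0,1)$, we have $m_j(\delta) = e^{j(1-\delta)}$, and since $e^{(1-\delta) n^\epsilon} \geq n$ for all $n$ large enough (the left side grows stretched-exponentially in $n$ while the right side grows polynomially), the event $\{\tilde\tau_{\lceil n^\epsilon \rceil} \leq n\}$ is contained in $\{\tilde\tau_{\lceil n^\epsilon \rceil} \leq m_{\lceil n^\epsilon \rceil}(\delta)\}$ for $n$ large. Proposition \ref{prop:expboundtau} gives $\proba(\tilde\tau_j \leq m_j(\delta)) = \mathcal{O}(e^{-j})$, so
\begin{equation*}
    \proba(\mathcal{\tilde V}\ndep > n^\epsilon) \leq \proba\left(\tilde\tau_{\lceil n^\epsilon \rceil} \leq m_{\lceil n^\epsilon \rceil}(\delta)\right) = \mathcal{O}\left(e^{-\lceil n^\epsilon \rceil}\right) = \mathcal{O}(e^{-n^\epsilon}),
\end{equation*}
which is the claimed estimate.

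The only delicate point is the reduction $\{\mathcal{\tilde V}\ndep > n^\epsilon\} \subseteq \{\tilde\tau_{\lceil n^\epsilon \rceil} \leq n\}$: one must be careful about whether the last (incomplete) jump contributes to $\mathcal{\tilde V}\ndep$ and about the precise relation between the number of jumps counted by $\mathcal{\tilde V}\ndep = 1 + \sum_{i=1}^{A_n - 1} N^{(i)}_{\bar T_{A_n}}$ and the partial sums $t_j = \sum_{k=1}^j \tilde\tau_k$. The clean way is to note directly from the definition $\mathcal{\tilde V}\ndep = \sup\{j \geq 1 : t_j \leq \mathcal{\tilde S}\ndep\}$ together with $\mathcal{\tilde S}\ndep \geq A_n$ being the relevant scale; but since $\mathcal{\tilde S}\ndep$ itself depends on $A_n$, it is cleanest to observe that $\mathcal{\tilde V}\ndep > n^\epsilon$ forces $t_{\lceil n^\epsilon \rceil} \leq \mathcal{\tilde S}\ndep$, and then bound $\mathcal{\tilde S}\ndep$ deterministically or note that in any case $t_{\lceil n^\epsilon \rceil} \leq \mathcal{\tilde S}\ndep$ already implies $\tilde\tau_{\lceil n^\epsilon \rceil} \leq \mathcal{\tilde S}\ndep$, so the real content is a tail bound on $\mathcal{\tilde S}\ndep$ of the form $\proba(\mathcal{\tilde S}\ndep \geq e^{(1-\delta)n^\epsilon}) = \mathcal{O}(e^{-n^\epsilon})$, which follows from the same Proposition applied to the last completed jump before $\bar T_{A_n}$. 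I expect this bookkeeping — matching indices between the three descriptions of $\mathcal{\tilde V}\ndep$ — to be the main (if minor) obstacle; the probabilistic heart of the argument is just the exponential tail of Proposition \ref{prop:expboundtau} evaluated at a stretched-exponential threshold.
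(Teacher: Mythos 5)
Your chain of inequalities is exactly the paper's: reduce $\{\mathcal{\tilde V}\ndep > n^\epsilon\}$ to an event about $\tilde\tau_{\lceil n^\epsilon\rceil}$ being at most $A_n \leq n$, note that $n \leq e^{(1-\delta)n^\epsilon} = m_{\lceil n^\epsilon\rceil}(\delta)$ for $n$ large, and invoke Proposition \ref{prop:expboundtau} to get $\mathcal{O}(e^{-n^\epsilon})$. The gap is in how you justify the first inclusion. It is not true that on $\{\mathcal{\tilde V}\ndep > n^\epsilon\}$ the $\lceil n^\epsilon\rceil$ jumps all occur ``within the first $A_n \leq n$ steps,'' nor that $\sum_{k=1}^{\lceil n^\epsilon\rceil}\tilde\tau_k \leq n$: the process runs until $\mathcal{\tilde S}\ndep$, which is only bounded \emph{below} by $A_n$ and is typically of order $e^{\gamma}A_n$, so the sum of the completed waiting times can greatly exceed $n$. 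The correct justification is pointwise and uses the stopping rule itself: $\mathcal{\tilde S}\ndep$ is triggered as soon as the process keeps the same value for $A_n$ consecutive steps, so if some waiting time $\tilde\tau_j$ with $j \leq \lceil n^\epsilon\rceil$ satisfied $\tilde\tau_j \geq A_n$, the process would stop while still sitting at that value and one would have $\mathcal{\tilde V}\ndep \leq j \leq n^\epsilon$. Hence $\{\mathcal{\tilde V}\ndep > n^\epsilon\} \subseteq \{\tilde\tau_{\lceil n^\epsilon\rceil} < A_n\} \subseteq \{\tilde\tau_{\lceil n^\epsilon\rceil} \leq n\}$, and the rest of your argument goes through verbatim; this is precisely the paper's proof.

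Your closing fallback, replacing this inclusion by a tail bound $\proba(\mathcal{\tilde S}\ndep \geq e^{(1-\delta)n^\epsilon}) = \mathcal{O}(e^{-n^\epsilon})$, should be dropped: it is both unnecessary and circular as sketched, because the only control you have on $\mathcal{\tilde S}\ndep$ at that scale comes from the number of jumps (each completed waiting time is $< A_n$, so roughly $\mathcal{\tilde S}\ndep \leq A_n(\mathcal{\tilde V}\ndep+1)$), i.e.\ from the very quantity $\mathcal{\tilde V}\ndep$ that Corollary \ref{cor:boundV} is trying to bound. The pointwise argument above resolves the bookkeeping you were worried about without any estimate on $\mathcal{\tilde S}\ndep$.
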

\begin{proof}
    This is the case because for every $\delta > 0$ and $n\geq1$ large enough,
    \begin{equation*}
        \proba(\mathcal{\tilde V}\ndep > n^{\epsilon}) \leq \proba(\tilde \tau_{n^\epsilon} < A_n) \leq \proba(\tilde \tau_{n^\epsilon} < e^{(1-\delta)n^\epsilon}) = \mathcal{O}(e^{- n^\epsilon}).
    \end{equation*}
    Where in the last equality we used the previous proposition.
\end{proof}


\subsection{Estimation of the total running time}

Because of the independence of $\bar T_{A_n}$ from the processes $N^{(i)}_t$ we have the following lemma, which is a variation of Campbell's formula.

\begin{lem}\label{lem:poissonmean}
    Let $\bar T_{A_n}$ and $N^{(i)}_t$ be as before, we then have that for each $i \in [{A_n}-1]$
    \begin{equation*}
        \espe\left(N^{(i)}_{\bar T_{A_n}}\right) = \int_0^\infty \lambda_i(t) e^{-\int_0^t \lambda_A(s)ds} dt.
    \end{equation*}
\end{lem}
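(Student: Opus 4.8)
The statement to prove is Lemma \ref{lem:poissonmean}, a Campbell-type formula for the expected number of points of the thinned process $N^{(i)}_t$ up to the independent random time $\bar T_{A_n}$. The key structural fact I would use is the independence of $\bar T_{A_n}$ from the family $\{N^{(i)}_t\}_{i \in [A_n-1]}$, which follows from the thinning construction: $\bar T_{A_n}$ is the first point of $N^{(A_n)}_t$, and the processes $N^{(A_n)}_t$ and $N^{(i)}_t$ (for $i \le A_n-1$) are obtained by thinning the same underlying rate-$1$ Poisson process $N_t$ with disjoint ``labels'' (jumping time $\ge A_n$ versus jumping time $= i$), hence are independent Poisson processes. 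So the plan is: (i) condition on the value of $\bar T_{A_n}$, (ii) use that conditionally $N^{(i)}_t$ is just an inhomogeneous Poisson process of intensity $\lambda_i(t)$ so that $\espe(N^{(i)}_s) = \int_0^s \lambda_i(t)\,dt$ by the ordinary Campbell/mean formula, and (iii) integrate against the density of $\bar T_{A_n}$.

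The remaining ingredient is the law of $\bar T_{A_n}$. Since $N^{(A_n)}_t$ is an inhomogeneous Poisson process of intensity $\lambda_A(t) = (1-e^{-t})^{A_n-1}$, the first-point ``survival'' probability is $\proba(\bar T_{A_n} > t) = \exp\!\left(-\int_0^t \lambda_A(s)\,ds\right)$, so its density is $f(t) = \lambda_A(t)\exp\!\left(-\int_0^t \lambda_A(s)\,ds\right)$. Then, writing $\Lambda_i(s) = \int_0^s \lambda_i(t)\,dt$,
\begin{equation*}
    \espe\left(N^{(i)}_{\bar T_{A_n}}\right) = \int_0^\infty \Lambda_i(s)\, f(s)\, ds = \int_0^\infty \left(\int_0^s \lambda_i(t)\,dt\right) \lambda_A(s) e^{-\int_0^s \lambda_A(u)du}\, ds.
\end{equation*}
A single application of Fubini's theorem (everything is nonnegative, so it is unconditionally justified) swaps the order of integration: the region is $\{0 \le t \le s < \infty\}$, and integrating out $s$ first gives $\int_t^\infty \lambda_A(s) e^{-\int_0^s \lambda_A(u)du}\,ds = e^{-\int_0^t \lambda_A(u)du}$, since the integrand is exactly $-\frac{d}{ds} e^{-\int_0^s \lambda_A(u)du}$. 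This leaves $\int_0^\infty \lambda_i(t) e^{-\int_0^t \lambda_A(u)du}\,dt$, which is the claimed expression.

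I do not anticipate a genuine obstacle here; the only point requiring a little care is justifying the independence of $\bar T_{A_n}$ from $N^{(i)}_\cdot$ rigorously (it is the disjoint-labelling/thinning property already invoked in Section \ref{sec:poischar}, and one may cite \cite{assunccao2007independence}), and being explicit that all quantities are nonnegative so that the conditioning step and the Fubini swap are legitimate without integrability side-conditions. An alternative, essentially equivalent, route is to write $N^{(i)}_{\bar T_{A_n}} = \int_0^\infty \mathbb{I}_{\{t \le \bar T_{A_n}\}}\, dN^{(i)}_t$ and take expectations using independence and the mean measure of $N^{(i)}$ directly; this bypasses introducing $\Lambda_i$ explicitly but uses the same Fubini step. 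I would present the first version since it is the most transparent.
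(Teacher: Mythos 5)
Your proposal is correct and follows essentially the same route as the paper: independence of $\bar T_{A_n}$ from $N^{(i)}_\cdot$, the mean formula $\espe(N^{(i)}_s)=\int_0^s\lambda_i(t)\,dt$, and a Fubini swap yielding $\int_0^\infty\lambda_i(t)\,\proba(\bar T_{A_n}>t)\,dt$ with $\proba(\bar T_{A_n}>t)=e^{-\int_0^t\lambda_A(s)ds}$. The only cosmetic difference is that the paper justifies the conditioning step by a discretisation/sandwiching argument rather than integrating directly against the density of $\bar T_{A_n}$.
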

\begin{proof}
Define $l_{jk} := [(k-1)/j,k/j)$. Fixing $j\geq1$, then for each $i \in [{A_n}-1]$,
\begin{equation*}
    \sum_{k\geq1} \espe\left( N^{(i)}_{(k-1)/j} \right) \proba(\bar T_{A_n}\in l_{jk}) \leq \espe\left(N^{(i)}_{\bar T_{A_n}}\right) \leq \sum_{k\geq1} \espe\left( N^{(i)}_{k/j} \right) \proba(\bar T_{A_n}\in l_{jk}).
\end{equation*}
Here we used the independence of $N^{(i)}_{k/j}$ from $\bar T_{A_n}$. And because $\espe\left( N^{(i)}_{k/j} \right) = \int_0^{k/j} \lambda_i(t) dt$, by letting $j\rightarrow\infty$ we get that
\begin{equation*}
    \begin{split}
        \espe\left(N^{(i)}_{\bar T_{A_n}}\right) & = \int_0^\infty \left( \int_0^s \lambda_i(t) dt \right) \proba_{\bar T_{A_n}}(ds) \\
        & = \int_0^\infty \lambda_i(t) \proba(\bar T_{A_n} > t ) dt.
    \end{split}
\end{equation*}
Where in the last equality we just changed the order of integration.
\end{proof}

Using this lemma, we can then calculate the mean value of $\mathcal{\tilde S}\ndep$.

\begin{prop}\label{prop:meanconvergence}
   Given the above construction, we have that $\espe\left(\mathcal{\tilde S}\ndep\right) = e^{H_{A_n-1}}$, where (for $n\in\mathbb{N}_0$) $H_n$ is the $n$-th Harmonic Number.
\end{prop}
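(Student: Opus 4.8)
The plan is to start from the Poisson-process representation $\mathcal{\tilde S}\ndep = A_n + \sum_{i=1}^{A_n-1} i\, N^{(i)}_{\bar T_{A_n}}$ derived in Section \ref{sec:poischar}, take expectations, and plug in the formula for $\espe(N^{(i)}_{\bar T_{A_n}})$ from Lemma \ref{lem:poissonmean}. This reduces the claim to an explicit integral identity. First I would compute the inner exponent: since $\lambda_A(s) = (1-e^{-s})^{A_n-1}$, the substitution $u = 1 - e^{-s}$ (so $du = e^{-s}\,ds = (1-u)\,ds$) turns $\int_0^t \lambda_A(s)\,ds$ and the whole outer integral into integrals over $u \in [0, 1-e^{-t}]$ and then $u \in [0,1]$. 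With $\lambda_i(t) = (1-e^{-t})^{i-1}e^{-t}$, the same substitution makes $\lambda_i(t)\,dt = u^{i-1}\,du$, which is the key simplification.

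The main computation is therefore to evaluate $\espe(\mathcal{\tilde S}\ndep) = A_n + \sum_{i=1}^{A_n-1} i \int_0^1 u^{i-1} e^{-\int_0^u (1-v)^{-1}(\text{-- wait, careful})} \, du$; more precisely one should track that $\int_0^t \lambda_A(s)\,ds = \int_0^{1-e^{-t}} w^{A_n-1}(1-w)^{-1}\,dw$ does \emph{not} have an elementary closed form, so instead I would integrate by parts or, better, swap the sum and integral first. Writing $\sum_{i=1}^{A_n-1} i u^{i-1} = \frac{d}{du}\sum_{i=1}^{A_n-1} u^i = \frac{d}{du}\frac{u - u^{A_n}}{1-u}$, the summed integrand becomes a total derivative, and after an integration by parts against the exponential factor $e^{-\int_0^t \lambda_A}$ the boundary terms and the remaining integral should telescope. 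The cleanest route is probably to recognize directly, via Lemma \ref{lem:poissonmean} applied with the thinning that defines $\bar T_{A_n}$, that $\espe(\mathcal{\tilde S}\ndep)$ equals $\espe$ of the index (in the $\tilde Y$-chain) at which a jumping time $\geq A_n$ first occurs, and that this index has the product-form distribution coming from Lemma \ref{lem:charjumps}; then $\espe(\mathcal{\tilde S}\ndep) = \sum_{i\geq 0} \proba(\text{no jump time} \geq A_n \text{ in first } i \text{ jumps})$, and each such probability factorizes into $\prod$ of terms $1 - (1-e^{-T_j})^{A_n-1}$ whose expectation over the Poisson points yields a product $\prod_{k} (1 - 1/(\text{something}))$ collapsing to $e^{H_{A_n-1}}$.

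The step I expect to be the main obstacle is precisely this collapse: showing that the iterated integral $A_n + \sum_{i=1}^{A_n-1} i \int_0^\infty \lambda_i(t)\, e^{-\int_0^t \lambda_A(s)\,ds}\,dt$ equals $e^{H_{A_n-1}} = \prod_{k=1}^{A_n-1} e^{1/k}$ — i.e., identifying the exact combinatorial/analytic identity that produces the Harmonic-number exponential. I would attack it by induction on $A_n$, peeling off the $i = A_n-1$ term and matching it to the difference $e^{H_{A_n-1}} - e^{H_{A_n-2}}$, using the recursive structure of the thinnings (the process $N^{(A_n-1)}$ relates to $N^{(A_{n}-2)}$ by an extra factor $e^{-t}$ versus $(1-e^{-t})$); alternatively, a generating-function argument on $\sum_i i u^{i-1}$ as above should make the induction step a one-line cancellation. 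Everything else — exchanging expectation with the finite sum, justifying the substitution, and the $\espe(N^{(i)}_{\bar T_{A_n}})$ formula — is routine given Lemma \ref{lem:poissonmean} and Corollary \ref{cor:boundV} (which guarantees the sum is effectively finite with overwhelming probability, so no integrability issue arises).
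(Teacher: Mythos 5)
Your starting point coincides with the paper's: write $\mathcal{\tilde S}\ndep = A_n + \sum_{i=1}^{A_n-1} i\, N^{(i)}_{\bar T_{A_n}}$, take expectations, and feed in Lemma \ref{lem:poissonmean}. But from there the proposal stops short of a proof: the identity that actually produces $e^{H_{A_n-1}}$ --- which you yourself flag as ``the main obstacle'' --- is never established. You offer three possible attacks (induction on $A_n$, a generating-function cancellation, a probabilistic factorisation over the Poisson points), all left as hopes (``should telescope'', ``should make the induction step a one-line cancellation''), and the third is incorrect as stated: the tail-sum formula $\sum_{i\ge 0}\proba(\mbox{no jumping time} \geq A_n \mbox{ among the first } i \mbox{ jumps})$ computes the expected \emph{number of jumps} before the first jumping time $\geq A_n$, i.e.\ essentially $\espe(\mathcal{\tilde V}\ndep)$, not $\espe(\mathcal{\tilde S}\ndep)$; in $\mathcal{\tilde S}\ndep$ the $i$-th thinned process is weighted by its jumping time $i$, and this weighting is precisely what makes an exponential of a harmonic number appear. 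A smaller inaccuracy: $\int_0^t\lambda_A(s)\,ds$ \emph{does} have an elementary closed form (with $u=1-e^{-t}$ it equals $-\ln(1-u)-\sum_{k=1}^{A_n-1}u^k/k$), and exploiting it is one clean way to finish; declaring it non-elementary and then resting the proof on an unexecuted telescoping leaves the core of the argument missing.

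For comparison, the paper closes this gap by direct computation: it sums the intensities $\sum_{i=1}^{A_n-1} i\lambda_i(t)$ in closed form and observes that the two resulting integrands are exact derivatives, giving $I_1=e^{\int_0^\infty[1-(1-e^{-s})^{A_n-1}]ds}-1$ and $I_2=-1$, hence $\espe(\mathcal{\tilde S}\ndep)=e^{\int_0^\infty[1-(1-e^{-s})^{A_n-1}]ds}$; the remaining integral is then identified with $H_{A_n-1}$ (via a binomial-transform identity, or simply by substituting $u=1-e^{-s}$ to obtain $\int_0^1\frac{1-u^{A_n-1}}{1-u}du=\sum_{k=1}^{A_n-1}\frac{1}{k}$). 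If you carry out one such computation explicitly, your outline becomes a proof; as written, the decisive identity is assumed rather than proved.
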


\begin{proof}
First, note that
\begin{equation*}
    \sum_{i=1}^{{A_n}-1} i \lambda_i(t) = \frac{e^t [1-(1-e^{-t})^{A_n-1}] - (A_n-1) (1-e^{-t})^{A_n-1}}{1-(1-e^{-t})^{A_n-1}}.
\end{equation*}
Then, by Lemma \ref{lem:poissonmean} we have that
\begin{equation*}
    \espe\left(\mathcal{\tilde S}\ndep\right) = I_1 + ({A_n}-1) I_2 + {A_n},
\end{equation*}
where
\begin{equation*}
    I_1 = \int_0^\infty e^t [1-(1-e^{-t})^{A_n-1}] e^{-\int_0^t (1-e^{-s})^{A_n-1}ds} dt = e^{\int_0^\infty 1-(1-e^{-s})^{A_n-1}ds} - 1,
\end{equation*}
and
\begin{equation*}
    I_2 = - \int_0^\infty (1-e^{-t})^{A_n-1} e^{-\int_0^t (1-e^{-s})^{A_n-1}ds} dt = -1.
\end{equation*}
Here the first integral was computed making the change of variables $u = e^{\int_0^t 1-(1-e^{-s})^{A_n-1}ds}$, and the second one by the change $v = e^{-\int_0^t (1-e^{-s})^{A_n-1}ds}$. Then,
\begin{equation*}
    \espe\left(\mathcal{\tilde S}\ndep\right) = e^{\int_0^\infty 1-(1-e^{-s})^{A_n-1}ds}.
\end{equation*}

To prove our result, it suffices to show that $\int_0^\infty 1-(1-e^{-s})^{A_n-1}ds = H_{{A_n}-1}$. By Newton's binomial,
\begin{equation*}
    \begin{split}
        \int_0^\infty 1-(1-e^{-s})^{A_n-1}ds & = \int_0^\infty \sum_{l=1}^{{A_n}-1} (-1)^{l} \binom{A_n-1}{l} \int_0^\infty e^{-ls} ds \\
        & = ({A_n}-1) \sum_{l'=0}^{{A_n}-2} (-1)^{l'} \binom{A_n-2}{l'} \frac{1}{(l'+1)^2}
    \end{split}
\end{equation*}
where in the second equality it was used that (for $0\leq k \leq n$) $\binom{n}{k} = n/k \binom{n-1}{k-1}$ and it was defined $l'=l-1$. Furthermore, the sum in the last equality is just the binomial transform of $1/(l'+1)^2$ which, by equality (8.39) in \cite{boyadzhiev2018binomial}, is equal to $H_{{A_n}-1}/({A_n}-1)$. \end{proof}

As a consequence of this proposition, we can derive an upper bound for the variance of the total number of steps.

\begin{prop}\label{cor:varStilde}
    Given the above construction, we have that 
    \begin{equation*}
        (A_n-1) \left[ A_n/2 + e^{H_{A_n-1}} - (A_n-1)  \right] \leq \Var(\mathcal{\tilde S}) \leq 2 A^2_n + A_n e^{H_{A_n-1}} - (A_n + e^{ H_{A_n-1}})^2.
    \end{equation*}
\end{prop}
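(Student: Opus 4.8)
The plan is to compute $\Var(\mathcal{\tilde S}\ndep)$ by working with the representation $\mathcal{\tilde S}\ndep = A_n + \sum_{i=1}^{A_n-1} i\, N^{(i)}_{\bar T_{A_n}}$, where the $N^{(i)}$ are the independent thinned Poisson processes introduced in Section~\ref{sec:poischar} and $\bar T_{A_n}$ is the first point of the (independent) process $N^{(A_n)}$. Since $\Var(\mathcal{\tilde S}\ndep) = \Var\left(\sum_i i\, N^{(i)}_{\bar T_{A_n}}\right)$, the strategy is to condition on $\bar T_{A_n} = t$: conditionally on $\bar T_{A_n}$, the variables $N^{(i)}_{t}$ are independent Poisson with mean $\Lambda_i(t) := \int_0^t \lambda_i(s)\,ds$, so the conditional mean of $\mathcal{\tilde S}\ndep - A_n$ is $\sum_i i\,\Lambda_i(t)$ and the conditional variance is $\sum_i i^2\,\Lambda_i(t)$. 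Then I would use the law of total variance,
\begin{equation*}
    \Var(\mathcal{\tilde S}\ndep) = \espe\!\left[\textstyle\sum_{i=1}^{A_n-1} i^2\,\Lambda_i(\bar T_{A_n})\right] + \Var\!\left(\textstyle\sum_{i=1}^{A_n-1} i\,\Lambda_i(\bar T_{A_n})\right),
\end{equation*}
and evaluate (or bound) each term using the same Campbell-type integral identity as in Lemma~\ref{lem:poissonmean}, i.e. $\espe[f(\bar T_{A_n})] = \int_0^\infty f(t)\,\lambda_A(t)\, e^{-\int_0^t \lambda_A(s)ds}\,dt$ after differentiating, together with the closed forms for $\sum_i i\lambda_i$ (already computed in the proof of Proposition~\ref{prop:meanconvergence}) and an analogous closed form for $\sum_i i^2\lambda_i(t)$.

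For the \textbf{upper bound}, the first term $\espe[\sum i^2 \Lambda_i(\bar T_{A_n})]$ can be controlled by bounding $\sum_{i=1}^{A_n-1} i^2 \lambda_i(t) \le (A_n-1)\sum_i i\,\lambda_i(t)$ crudely, or more sharply by computing the generating-function identity $\sum_{i\ge1} i^2 x^{i-1} = (1+x)/(1-x)^3$ truncated at $A_n-1$; integrating against the survival weight and using the change of variables from Proposition~\ref{prop:meanconvergence} should yield a term of order $A_n e^{H_{A_n-1}}$ plus lower-order polynomial corrections. The second (variance-of-conditional-mean) term is nonnegative, so for the upper bound I can either drop it after rewriting $\Var(\mathcal{\tilde S}\ndep)$ cleverly, or bound $\sum_i i\Lambda_i(\bar T_{A_n}) \le \mathcal{\tilde S}\ndep - A_n$ pathwise and relate its variance back to quantities already in hand; the target RHS $2A_n^2 + A_n e^{H_{A_n-1}} - (A_n + e^{H_{A_n-1}})^2$ strongly suggests that one writes $\Var(\mathcal{\tilde S}\ndep) = \espe[(\mathcal{\tilde S}\ndep)^2] - (\espe\,\mathcal{\tilde S}\ndep)^2$ with $\espe\,\mathcal{\tilde S}\ndep = A_n + e^{H_{A_n-1}} - 1 \approx A_n + e^{H_{A_n-1}}$ from Proposition~\ref{prop:meanconvergence}, and bounds $\espe[(\mathcal{\tilde S}\ndep)^2] \le 2A_n^2 + A_n e^{H_{A_n-1}}$ using $\mathcal{\tilde S}\ndep \le A_n \mathcal{\tilde V}\ndep$ (each of the $\le \mathcal{\tilde V}\ndep$ visited vertices contributes a jumping time, and the indices $i$ in the sum are at most $A_n$) together with a second-moment estimate for $\mathcal{\tilde V}\ndep = 1 + \sum_i N^{(i)}_{\bar T_{A_n}}$.

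For the \textbf{lower bound}, I would keep only the first term of the law of total variance (discarding the nonnegative variance-of-conditional-mean term) and bound $\sum_{i=1}^{A_n-1} i^2\lambda_i(t) \ge \sum_{i=1}^{A_n-1} i\,\lambda_i(t)$ (since $i\ge1$), which gives $\espe[\sum i^2\Lambda_i(\bar T_{A_n})] \ge \espe[\mathcal{\tilde S}\ndep] - A_n = e^{H_{A_n-1}} - 1$; to reach the stated $(A_n-1)[A_n/2 + e^{H_{A_n-1}} - (A_n-1)]$ I expect one needs a better lower bound on $\sum i^2\lambda_i$, namely using $\sum_{i=1}^{A_n-1} i^2\lambda_i(t) \ge (A_n/2)\sum_{i} i\lambda_i(t)$ on the bulk of the mass (the jumping-time indices concentrate near $A_n$ by Proposition~\ref{prop:expboundtau}-type reasoning, or more elementarily because $\bar T_{A_n}$ is typically of order $\ln A_n$, making the typical $i$ of order $A_n$), plus an explicit evaluation of $\espe[\mathcal{\tilde V}\ndep]$. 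The \textbf{main obstacle} I anticipate is getting the constants exactly right — in particular establishing the $A_n/2$ and $2A_n^2$ coefficients, which requires either an exact closed form for $\int_0^\infty \big(\sum_{i=1}^{A_n-1} i^2\lambda_i(t)\big)\lambda_A(t)e^{-\int_0^t\lambda_A}dt$ via the $(1+x)/(1-x)^3$ generating function and the binomial-transform identities from \cite{boyadzhiev2018binomial}, or a careful two-sided estimate of the second moment of $\mathcal{\tilde V}\ndep$; the Poisson/Campbell machinery itself is routine once the conditioning structure is set up.
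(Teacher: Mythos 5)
Your overall skeleton for the lower bound (condition on $\bar T_{A_n}$, use the law of total variance, keep the conditional-variance term $\espe[\sum_i i^2\Lambda_i(\bar T_{A_n})]$) is essentially the paper's argument — the paper phrases it as nonnegativity of the cross-covariances $\Cov(N^{(i)}_{\bar T_{A_n}},N^{(j)}_{\bar T_{A_n}})$ via the law of total covariance, plus $\Var(N^{(i)}_{\bar T_{A_n}})\ge \espe(N^{(i)}_{\bar T_{A_n}})$ — but the step that actually produces the stated expression is missing from your plan. The paper's key move is the exact computation $\espe(N^{(i)}_\infty)=\int_0^\infty(1-e^{-t})^{i-1}e^{-t}dt=1/i$, followed by the add-and-subtract trick $\sum_i i^2\espe(N^{(i)}_{\bar T_{A_n}})=\sum_i i+\sum_i i^2\bigl[\espe(N^{(i)}_{\bar T_{A_n}})-\tfrac1i\bigr]\ge\sum_i i+(A_n-1)\sum_i i\bigl[\espe(N^{(i)}_{\bar T_{A_n}})-\tfrac1i\bigr]$, which is legitimate because the bracketed corrections are nonpositive and $i\le A_n-1$; combining with $\sum_i i\,\espe(N^{(i)}_{\bar T_{A_n}})$ from Proposition \ref{prop:meanconvergence} gives exactly $(A_n-1)[A_n/2+e^{H_{A_n-1}}-(A_n-1)]$. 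Your substitute, $\sum_i i^2\lambda_i(t)\ge(A_n/2)\sum_i i\lambda_i(t)$ "on the bulk of the mass", is false pointwise for small $t$ (there $\lambda_i(t)\approx t^{i-1}$, so both sums are dominated by $i=1$ and the ratio is near $1$), and since $\Lambda_i(t)=\int_0^t\lambda_i(s)ds$ integrates over all $s\le\bar T_{A_n}$ you cannot simply ignore that region; you acknowledge the constant $A_n/2$ is the obstacle, and indeed your plan does not close it.

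For the upper bound the gap is more serious: your two concrete suggestions do not work. Dropping a nonnegative term from the total-variance decomposition can only yield lower bounds, and no "clever rewriting" changes that; and the route $\mathcal{\tilde S}\ndep\le A_n\mathcal{\tilde V}\ndep$ with a second-moment bound on $\mathcal{\tilde V}\ndep$ cannot reach $2A_n^2+A_ne^{H_{A_n-1}}$, because $\espe(\mathcal{\tilde V}\ndep)=1+\sum_i\espe(N^{(i)}_{\bar T_{A_n}})\asymp\ln A_n$ (each term is $\approx 1/i$ for $i\ll A_n$), so $A_n^2\,\espe[(\mathcal{\tilde V}\ndep)^2]\ge A_n^2(\espe\,\mathcal{\tilde V}\ndep)^2$ is of order $A_n^2(\ln A_n)^2$, overshooting the target order $A_n^2$. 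The paper's upper bound instead uses the pathwise domination $N^{(i)}_{\bar T_{A_n}}\le N^{(i)}_\infty$ together with the independence of the processes $N^{(i)}_\infty$, $\Var(N^{(i)}_\infty)=\espe(N^{(i)}_\infty)=1/i$, and then expands the second moment of $A_n+\sum_i iN^{(i)}_\infty$ term by term; this domination step is the missing idea in your proposal. Finally, note that the mean you quote, $\espe(\mathcal{\tilde S}\ndep)=A_n+e^{H_{A_n-1}}-1$, contradicts Proposition \ref{prop:meanconvergence}, which gives $\espe(\mathcal{\tilde S}\ndep)=e^{H_{A_n-1}}$ (the $A_n$ and the $-1$ cancel in that computation), so the subtraction step you build on it is not justified as stated.
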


\begin{proof}
    We will start by proving the upper bound. First note that, for $i\geq1$, $N_{\bar T_{A_n}}^{(i)} \leq N_{\infty}^{(i)}$. Moreover, the mean value of the right hand side can be explicitly computed and gives
    \begin{equation}\label{eq:mediasemirecta}
        \begin{split}
            \espe\left( N_{\infty}^{(i)} \right) & = \int_0^\infty \left(1-e^{-t}\right)^{i-1} e^{-t} dt = \sum_{k=0}^{i-1} (-1)^k \binom{i-1}{k} \int_0^\infty e^{-(k+1)t}dt\\
            & = \sum_{k=0}^{i-1} (-1)^k \binom{i-1}{k} \frac{1}{k+1} = \frac{1}{i}
        \end{split}
    \end{equation}
    \noindent
    where in the last equality we used relation (9.25) from \cite{boyadzhiev2018binomial}. Making use of \eqref{eq:mediasemirecta}, we can bound the second moment of $\mathcal{\tilde S}\ndep$ according to
    \begin{equation*}
        \begin{split}
            \espe\left[ \left( \mathcal{\tilde S}\ndep \right)^2 \right] \leq & \ \espe\left[\left(A_n + \sum_{i=1}^{A_n-1} i N_\infty^{(i)}\right)^2\right] \\
            & \ \ \ \ = \sum_{i=1}^{A_n-1} i^2 \Var\left(N_\infty^{(i)}\right) + \sum_{i=1}^{A_n-1} i^2 \espe\left(N_\infty^{(i)}\right)^2 \\
            & \ \ \ \ \ \ \ \ \ \ \ \ \ \ \ \ \ \ + \sum_{j<i}^{A_n-1} i j \espe\left(N_\infty^{(i)} N_\infty^{(j)}\right) + A_n e^{H_{A_n-1}} + A_n^2\\
            & \ \ \ \ = \sum_{i=1}^{A_n-1} i^2 \espe\left(N_\infty^{(i)}\right) + \sum_{i=1}^{A_n-1} i^2 \espe\left(N_\infty^{(i)}\right)^2 \\
            & \ \ \ \ \ \ \ \ \ \ \ \ \ \ \ \ \ \ + \sum_{j<i}^{A_n-1} i j \espe\left(N_\infty^{(i)}\right) \espe\left(N_\infty^{(j)}\right) + A_n e^{H_{A_n-1}} + A_n^2 \\
            & \ \ \ \ = \sum_{i=1}^{A_n-1} i + \sum_{i=1}^{A_n-1} 1 + \sum_{j<i}^{A_n-1} 1 + A_n e^{H_{A_n-1}} + A_n^2 \\
            & \ \ \ \ = 2 A_n^2 + A_n e^{H_{A_n-1}} - 1,
        \end{split}
    \end{equation*}
    \noindent
    where, for the first equality, we used the previous proposition and, for the second one, we used the independence of $N_\infty^{(i)}$ from $N_\infty^{(j)}$ and that, because they are Poisson processes, $\Var(N_\infty^{(i)}) = \espe(N_\infty^{(i)})$. Then, subtracting $\espe^2(\mathcal{\tilde S})$ from both sides of the inequality we obtain the upper bound of the proposition.
    
    For the lower bound, observe that by the \emph{law of total covariance} we have that for every distinct pair $i,j\in[A_n-1]$
    \begin{equation*}
        \begin{split}
            \Cov\left(N^{(i)}_{\bar T_{A_n}},N^{(j)}_{\bar T_{A_n}}\right)) & = \espe\left[\Cov(N^{(i)}_{\bar T_{A_n}},N^{(j)}_{\bar T_{A_n}}|\bar T_{A_n})\right] + \Cov\left[\espe(N^{(i)}_{\bar T_{A_n}}|\bar T_{A_n}),\espe(N^{(j)}_{\bar T_{A_n}}|\bar T_{A_n})\right] \\
            & = \Cov\left[\espe(N^{(i)}_{\bar T_{A_n}}|\bar T_{A_n}),\espe(N^{(j)}_{\bar T_{A_n}}|\bar T_{A_n})\right] \geq 0,
        \end{split}
    \end{equation*}
    where in the second line we used that, conditional on $\bar T_{A_n}$, both variables are independent and that their conditional expectations are increasing functions of $\bar T_{A_n}$ (and therefore have positive covariance). We then get that
    \begin{equation*}
        \begin{split}
            \Var\left(\mathcal{\tilde S}\right) & = \sum_{i=1}^{A_n-1} i^2 \Var(N^{(i)}_{\bar T_{A_n}}) + \sum_{i\neq j}^{A_n-1} ij \ \Cov(N^{(i)}_{\bar T_{A_n}},N^{(j)}_{\bar T_{A_n}}) \\
            & \geq  \sum_{i=1}^{A_n-1} i^2 \Var\left[\espe( N^{(i)}_{\bar T_{A_n}}|\bar T_{A_n})\right] + \sum_{i=1}^{A_n-1} i^2 \espe\left[\Var( N^{(i)}_{\bar T_{A_n}}|\bar T_{A_n})\right] \\
            & \geq \sum_{i=1}^{A_n-1} i^2 \espe( N^{(i)}_{\bar T_{A_n}}) = \sum_{i=1}^{A_n-1} i^2 \espe( N^{(i)}_{\infty}) + \sum_{i=1}^{A_n-1} i^2 [\espe( N^{(i)}_{\bar T_{A_n}} )- \espe( N^{(i)}_{\infty})] \\
            & \geq \sum_{i=1}^{A_n-1} i + (A_n-1) \sum_{i=1}^{A_n-1} i \left[\espe( N^{(i)}_{\bar T_{A_n}} )- \frac{1}{i}\right] \\
            & = \frac{(A_n-1)A_n}{2} + (A_n-1) e^{H_{A_n-1}} - (A_n-1)^2,
        \end{split}
    \end{equation*}
    where in the second line we used the positivity of the covariance, in the third one that $\Var( N^{(i)}_{\bar T_{A_n}}|\bar T_{A_n}) = \espe( N^{(i)}_{\bar T_{A_n}}|\bar T_{A_n})$, and in the last one that $\sum_{i=1}^{A_n-1} i \espe( N^{(i)}_{\bar T_{A_n}} ) = e^{H_{A_n-1}}$.
\end{proof}
In particular, this last proposition implies that $\Var(\mathcal{\tilde S})$ scales as $A^2_n$, which means that $\mathcal{\tilde S}$ does not converge to a constant in the limit. Note that the lower bound bound is asymptotically equal to the expression $(1/2+\gamma - 1)A_n^2 \approx 0.0772 A^2_n$; while the upper bound is asymptotically $(1-\gamma(1+\gamma))A_n^2 \approx 0.0896 A_n^2$.

\section{Proofs of the main results}\label{sec:proofspot}

In this section we present the proofs of the results of Section \ref{sec:mainpot}. Along many of them we make use of the fact that we can couple a realisation of the cost exploration process and the AGDs so that the later is equal to the former restricted to some random set of times. 

As mentioned before, by defining a sequence $\suce{W}{i}$ of independent uniform $[0,1]$ random variables, we couple the sequence of AGDs $\suce{X\ndep}{i}$ so that each time a new vertex is explored, its value is taken sequentially from that sequence. We can then also couple the whole sequence of best AGDs to a single cost exploration: the one associated to $\suce{W}{i}$, which we denote $\suce{\tilde X}{i}$. Under this coupling we have that for every $n,i\geq1$, $X\ndep_i = \tilde X_{F\ndep_i}$ (where $F\ndep_i$ is the number of explored vertices up to step $i$).


\subsection{Proof of Lemma \ref{lem:BRAoptimality}}

Here we will parametrise the time evolution of the algorithms studied in a slightly different way. We will assume that during each step only one vertex is explored. That is, during each step a random unexplored vertex is selected and if it is found to be connected to the current vertex, then its cost value is revealed and it is declared \emph{explored}. In this way, a local search algorithm can (during each step) either move to a vertex in the currently explored portion of the neighbourhood of the current vertex or stay in the same vertex and keep exploring its neighbourhood. As before, we will describe the evolution of the local search algorithms by the associated sequence of states $(X_i,U_i)_{i\geq1}$; where, $X_i$ is the cost of the vertex the algorithm is during step $i$ and $U_i$ is the number of vertices explored up to step $i$. In the case of AGD, this sequence is a Markov chain.

We will first prove the conclusion of the lemma for a variation of the AGD where during each step if it explores a vertex with a lower cost, it automatically jumps to it. Suppose that the evolution of this process is given by the sequence $(X_i,U_i)_{i\geq1}$ and there is some other local search algorithm with evolution $(\bar X_i,\bar U_i)_{i\geq1}$. To prove this lemma we will couple both graphs explored in such a way that, when an edge is found in the graph explored by one process, there is also one in the other. We will also couple the sequence of cost values revealed $(W_i)_{i\geq1}$, so that it is the same for both processes. This means that for all $i\geq1$, the $i$-th value revealed by both processes will be the same and equal to $W_i$.

We want to show that this coupling ensures that the realisations of both processes (until one of them finds a solution) is such that that $X_i \geq \bar X_i$ and $U_i = \bar U_i$. For this, assume that initially both processes have the same values. We will show that if the processes at step $i\geq1$ satisfy these relations, they still satisfy them at step $i+1$. Indeed, because by the coupling described an edge is found in one graph iff an edge is found on the other one, then during step $i+1$ we will have the same number of explored vertices in both processes ($U_{i+1}=\bar U_{i+1}$). Furthermore, because the values of the new costs explored are coupled to be equal for both processes and because the $X_{i+1}$ is always equal to the lowest cost value explored so far, this means that $X_{i+1}\geq\bar X_{i+1}$. This then advances the induction.

Now, define $T_j$ as the time (at step $j\geq1$) since the last change of value of $X_i$, and define $\bar T_j$ in an analogous way for $\bar X_i$. The process $X_i$ has found a solution at step $i\geq1$ iff $T_i = U_{i-T_i}$. Analogously, the process $\bar X_i$ has not found a solution at step $i\geq1$ if either $\bar T_i < \bar U_{i-\bar T_i}$ or if a vertex of lower cost value was found since the last jump. But note that, under this coupling, if $\bar T_i > T_i$ then the vertex $\bar v_i$ associated to the cost $\bar X_i$ is not a solution. Indeed, if $\bar T_i > T_i$, it means that the process $X_i$ has jumped after the last jump of $\bar X_i$. But this implies that some vertex with cost lower than $\bar X_i$ has been found in some of the neighbourhoods of $\bar v_i$, and therefore it is not a solution. Hence $\mathcal{E} \leq \mathcal{\bar E}$ under this coupling.

Finally, to finish the proof of the lemma it is enough to note that the value of the minimum found by AGD is equal to the one found by this variant. For this observe that the difference between the number of vertices explored by the AGD and this processes is at most equal to the number of vertices explored by the AGD in the last step (that is, it is of $\mathcal{O}_\mathbb{P}(1)$). Thus, the difference will be $\mathcal{O}_\mathbb{P}(1)$.


\subsection{Proof of Theorem \ref{thm:sparsemean}}

As noted before, the AGD is distributed as the cost exploration discussed in the previous section but restricted to a random collection of times. This fact will be central throughout the proof.

We now define three events, under whose compliment the stopping time of the algorithm will be close to the one of the cost exploration process. Let $\epsilon > 0$ be a small constant to be fixed later on and define:
\begin{itemize}
    \item $A_\epsilon := \{ \mathcal{V}\ndep \geq n^\epsilon \}$. Then, this is the event under which \emph{too many vertices are visited}.
    \item $B_\epsilon := \left\{ \exists j \in[\mathcal{V}\ndep] \ s.t. \ \big|D\ndep_j n^{\alpha-1} - \lambda\big| > \epsilon \right\}$; where we defined (for $j\geq1$) $D_j^{(n)}$ as the size of the neighbourhood of the $j$-th vertex visited by the algorithm. Then, this is the event under which \emph{there is some neighbourhood visited that is either too large or too small}.
    \item $C_\epsilon := \left\{\bigg| \frac{1}{\mathcal{S}\ndep} \sum_{i=1}^{\mathcal{S}\ndep} E_i\ndep - \beta \bigg| > \epsilon \right\}$; where, as before, (for $i\geq1$) $E_i^{(n)}$ is the number of explored vertices during the $i$-th step of the algorithm. Thus, this is the event under which \emph{the mean number of vertices explored per step is either too large or too small}.
\end{itemize}

By estimating the probabilities of these \emph{bad} events, we can determine the asymptotic behaviour of the total number of steps during the process. But first we will prove that under the compliment of these events, the number of steps $\mathcal{S}\ndep$ can be estimated by the corresponding stopping time of the associated cost exploration process, i.e.,
fixing 
$$A_n = \lambda n^{1-\alpha}.$$

Given the cost exploration process $\tilde X_i$ coupled with the sequence of AGDs $X\ndep_i$, define the following stopping times (for $\epsilon > 0$):
\begin{equation*}
    \mathcal{\tilde S}\ndep:=\inf\{ j\geq1: \tilde X_j = \tilde X_{j-\lceil \lambda n^{1-\alpha}\rceil}\}
\end{equation*}
\begin{equation*}
    \mathcal{\tilde S}\ndep_{\epsilon^+}:=\inf\{ j\geq1: \tilde X_j = \tilde X_{j-\lceil (\lambda + \epsilon) n^{1-\alpha}\rceil}\}
\end{equation*}
\begin{equation*}
    \mathcal{\tilde S}\ndep_{\epsilon^-}:=\inf\{ j\geq1: \tilde X_j = \tilde X_{j-\lfloor (\lambda - \epsilon) n^{1-\alpha}\rfloor}\}
\end{equation*}

\begin{lem}\label{lem:cotaStilde}
    Conditionally on the events $B^c_\epsilon$ and $C^c_\epsilon$, we have $\frac{\mathcal{\tilde S}\ndep_{\epsilon^-}}{\beta+\epsilon}\leq \mathcal{S}\ndep \leq \frac{\mathcal{\tilde S}\ndep_{\epsilon^+}}{\beta-\epsilon}$ and $\mathcal{\tilde S}\ndep_{\epsilon^-}\leq \mathcal{E}\ndep \leq \mathcal{\tilde S}\ndep_{\epsilon^+}$.
\end{lem}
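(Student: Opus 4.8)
The plan is to exploit the coupling $X^{(n)}_i = \tilde X_{F^{(n)}_i}$, where $F^{(n)}_i$ is the number of vertices explored by the AGD up to step $i$, and to control $F^{(n)}_i$ using the events $B_\epsilon^c$ and $C_\epsilon^c$. First I would observe that the AGD declares the current vertex $v_i$ a local minimum exactly when it has explored its whole neighbourhood in $G^{(n)}$ without finding a lower cost, i.e. when $\tilde X$ has remained constant for the last $\lceil D^{(n)}_j/B \rceil$ steps of the AGD, where $D^{(n)}_j$ is the size of the $j$-th visited neighbourhood; translating through the coupling, $\mathcal{S}^{(n)}$ is the first step at which $\tilde X_{F^{(n)}_i}$ equals $\tilde X_{F^{(n)}_i - D^{(n)}_j}$ for the relevant $j$ — that is, the cost exploration process has seen no new record over a window of $D^{(n)}_j$ explored cost values. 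On $B_\epsilon^c$ we have $(\lambda-\epsilon)n^{1-\alpha} \le D^{(n)}_j \le (\lambda+\epsilon)n^{1-\alpha}$ for every visited $j$ simultaneously, so the stopping event for the cost exploration process is sandwiched between the events defining $\mathcal{\tilde S}^{(n)}_{\epsilon^-}$ and $\mathcal{\tilde S}^{(n)}_{\epsilon^+}$ (monotonicity of ``no record in a window'' in the window length). Since $\mathcal{E}^{(n)} = F^{(n)}_{\mathcal{S}^{(n)}}$ is precisely the index in the cost exploration process at which the algorithm stops, this already yields $\mathcal{\tilde S}^{(n)}_{\epsilon^-} \le \mathcal{E}^{(n)} \le \mathcal{\tilde S}^{(n)}_{\epsilon^+}$.

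Next I would pass from $\mathcal{E}^{(n)}$ to $\mathcal{S}^{(n)}$ using $C_\epsilon^c$. By definition $\mathcal{E}^{(n)} = \sum_{i=1}^{\mathcal{S}^{(n)}} E^{(n)}_i$, and on $C_\epsilon^c$ this sum lies between $(\beta-\epsilon)\mathcal{S}^{(n)}$ and $(\beta+\epsilon)\mathcal{S}^{(n)}$. Hence $\mathcal{E}^{(n)}/(\beta+\epsilon) \le \mathcal{S}^{(n)} \le \mathcal{E}^{(n)}/(\beta-\epsilon)$, and combining with the bounds on $\mathcal{E}^{(n)}$ from the previous paragraph gives
\begin{equation*}
    \frac{\mathcal{\tilde S}^{(n)}_{\epsilon^-}}{\beta+\epsilon} \le \mathcal{S}^{(n)} \le \frac{\mathcal{\tilde S}^{(n)}_{\epsilon^+}}{\beta-\epsilon},
\end{equation*}
which is the claimed inequality. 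One small technical point to handle carefully is the discrepancy flagged in Section~\ref{sec:characpot}: the last partial neighbourhood of each visited vertex has size $D^{(n)}_j \bmod B$ rather than a full $\mathrm{Binom}(B,\lambda/n^\alpha)$, so the per-step count $E^{(n)}_i$ is slightly off on those $O(\mathcal{V}^{(n)})$ steps; under $B_\epsilon^c$ this only perturbs the window lengths and the sums by $O(B)=O(n^\alpha)$ in total, which is lower order and can be absorbed, but I would want to state this explicitly rather than sweep it under the rug.

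The main obstacle I anticipate is making the ``no record over a window of $D^{(n)}_j$ cost values'' characterisation of the stopping event precise and genuinely monotone: one must be careful that the AGD groups explorations into blocks of size $B$, so the algorithm only checks for a local minimum at block boundaries, meaning the effective window is $\lceil U^{(n)}_j/B \rceil$ blocks $=$ roughly $D^{(n)}_j$ cost values but with rounding, and one must verify that replacing $D^{(n)}_j$ by the uniform deterministic bounds $\lfloor(\lambda\mp\epsilon)n^{1-\alpha}\rfloor$ really does sandwich the true stopping time in both directions and not just one. Everything else — the arithmetic with $C_\epsilon^c$, the coupling identity $X^{(n)}_i=\tilde X_{F^{(n)}_i}$ — is routine once this structural translation is nailed down.
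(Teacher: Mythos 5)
Your proposal is correct and follows essentially the same route as the paper: you use $B_\epsilon^c$ together with the coupling $X^{(n)}_i=\tilde X_{F^{(n)}_i}$ to sandwich $\mathcal{E}^{(n)}$ between $\mathcal{\tilde S}^{(n)}_{\epsilon^-}$ and $\mathcal{\tilde S}^{(n)}_{\epsilon^+}$, and then $C_\epsilon^c$ to convert between $\mathcal{E}^{(n)}$ and $\mathcal{S}^{(n)}$ via $(\beta\pm\epsilon)$. The only difference is that you make the window-monotonicity and block-rounding points explicit, which the paper leaves implicit.
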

\begin{proof}
    First note that
    \begin{equation*}
        \mathcal{S}\ndep = \frac{\mathcal{E}\ndep}{\frac{1}{\mathcal{S}\ndep}\mathcal{E}\ndep} = \frac{\mathcal{E}\ndep}{\frac{1}{\mathcal{S}\ndep}\sum_{i=1}^{\mathcal{S}\ndep} E\ndep_i}.
    \end{equation*}
    And conditionally on the event $C_\epsilon^c$, $\sum_{i=1}^{\mathcal{S}\ndep}  E\ndep_i \geq (\beta - \epsilon) \mathcal{S}\ndep$. Then
    \begin{equation*}
        \mathcal{S}\ndep \leq \frac{\mathcal{E}\ndep}{\beta - \epsilon}.
    \end{equation*}
    Furthermore, on the event $B_\epsilon^c$ no neighbourhood of a visited vertex is larger than $(\lambda + \epsilon)n^{1-\alpha}$, then by the coupling presented at the beginning of the section we will have that $\mathcal{E}\ndep \leq \mathcal{\tilde S}\ndep_{\epsilon^+}$. Which results in
    \begin{equation*}
        \mathcal{S}\ndep \leq \frac{\mathcal{\tilde S}\ndep_{\epsilon^+}}{\beta - \epsilon}.
    \end{equation*}
    The lower bound is obtained in an analogous way.
\end{proof}

\begin{proof}[Proof of Theorem \ref{thm:sparsemean}]
    Let us first give estimations for the probabilities of $A_\epsilon$, $B_\epsilon$, and $C_\epsilon$. By the fact that each visited vertex has at most $n$ neighbours and the coupling with the cost exploration process, we can apply Corollary \ref{cor:boundV} with $A_n$ equal to $n$ to obtain that
    \begin{equation}\label{eq:cotaA}
        \proba(A_\epsilon) = \mathcal{O}(e^{-n^\epsilon}).
    \end{equation}
    For an estimation of the second event, we assume that the event $A_\epsilon^c$ holds. We then have that
    \begin{equation}\label{eq:cotaB}
        \proba(B_\epsilon|A_\epsilon^c) \leq n^\epsilon \proba(\big|D\ndep_1 n^{\alpha-1} - \lambda\big| > \epsilon) \leq n^\epsilon e^{-(\lambda + \epsilon) n^{1-\alpha} \ln(1+\epsilon/\lambda)} (1+o(1)),
    \end{equation}
    where we used that $D\ndep_1 \sim \mbox{Binom}(\lambda n^{-\alpha},n)$ and Chernoff's inequality. Finally, under events $A^c_\epsilon$ and $B^c_\epsilon$, the number of unexplored vertices during the whole process is going to be $n(1+o(1))$. Then, conditioning on these events, the number of explored vertices on each step is well approximated (in the sense that it can be coupled to match the value up to a $o(1)$ correction) 
    by i.i.d. Poisson variables of mean $\beta$. And because $\mathcal{S}^{(n)}$ is a.s. larger than the size of the smallest neighbourhood visited, which under event $B^c_\epsilon$ is larger or equal to $n^{1-\alpha} (\lambda-\epsilon)$, we have that
    \begin{equation}\label{eq:cotaC}
        \proba(C_\epsilon|A_\epsilon\cap B_\epsilon,\mathcal{S}^{(n)}) \leq \frac{\mathcal{S}^{(n)}\beta}{(\mathcal{S}^{(n)}\epsilon)^2} = \mathcal{O}(n^{-(1-\alpha)}).
    \end{equation}
    
    Defining the event $D_\epsilon = A^c_\epsilon \cap B^c_\epsilon \cap C^c_\epsilon$, the difference in $L^1$ between $\mathcal{S}\ndep/n^{1-\alpha}$ and $\mathcal{\tilde S}\ndep/\beta n^{1-\alpha}$ is given by
    \begin{equation}\label{eq:partoespe}
        \begin{split}
            \espe\left(\bigg|\frac{\mathcal{S}\ndep}{n^{1-\alpha}}-\frac{\mathcal{\tilde S}\ndep}{\beta n^{1-\alpha}}\bigg|\right) & = \espe\left(\bigg|\frac{\mathcal{S}\ndep}{n^{1-\alpha}}-\frac{\mathcal{\tilde S}\ndep}{\beta n^{1-\alpha}}\bigg|\mathbb{I}_{D_\epsilon}\right) + \espe\left(\frac{\mathcal{S}\ndep}{n^{1-\alpha}}\mathbb{I}_{A_\epsilon}\right) \\
            & \ \ \ \ \ \ \ \ \ \ \ \ + \espe\left(\frac{\mathcal{S}\ndep}{n^{1-\alpha}}\mathbb{I}_{A_\epsilon^c \cap B_\epsilon}\right) \espe\left(\frac{\mathcal{S}\ndep}{n^{1-\alpha}}\mathbb{I}_{A_\epsilon^c\cap B_\epsilon^c \cap C_\epsilon}\right) + \espe\left(\frac{\mathcal{\tilde S}\ndep}{\beta n^{1-\alpha}}\mathbb{I}_{D^c_\epsilon}\right),
        \end{split}
    \end{equation}
    where we used that $D^c_\epsilon = A_\epsilon \cup (A_\epsilon^c \cap B_\epsilon) \cup (A_\epsilon^c\cap B_\epsilon^c\cap C_\epsilon)$. We will now bound the different terms in the right hand side of this equation. For the first one note that
    \begin{equation*}
            \espe\left(\bigg|\frac{\mathcal{S}\ndep}{n^{1-\alpha}}-\frac{\mathcal{\tilde S}\ndep}{\beta n^{1-\alpha}}\bigg|\mathbb{I}_{D_\epsilon}\right) \leq \espe\left(\bigg|\frac{\mathcal{\tilde S}_{\epsilon^+}\ndep}{(\beta-\epsilon)n^{1-\alpha}}-\frac{\mathcal{\tilde S}_{\epsilon^-}\ndep}{(\beta+\epsilon) n^{1-\alpha}}\bigg|\right)
    \end{equation*}
    as $\mathcal{\tilde S}\ndep_{\epsilon^-} \leq \mathcal{\tilde S}\ndep \leq \mathcal{\tilde S}\ndep_{\epsilon^+}$ and by Lemma \ref{lem:cotaStilde} under $D_\epsilon$ we have that $\frac{\mathcal{\tilde S}\ndep_{\epsilon^-}}{\beta+\epsilon} \leq \mathcal{S}\ndep \leq \frac{\mathcal{\tilde S}\ndep_{\epsilon^+}}{\beta-\epsilon}$. Furthermore, because $\mathcal{\tilde S}\ndep_{\epsilon^-} \leq \mathcal{\tilde S}\ndep_{\epsilon^+}$ then the absolute value can be removed from the r.h.s. of the inequality. Then, by Proposition \ref{prop:meanconvergence} we have that
    \begin{equation}\label{eq:cotaterm1}
        \espe\left(\bigg|\frac{\mathcal{S}\ndep}{n^{1-\alpha}}-\frac{\mathcal{\tilde S}\ndep}{\beta n^{1-\alpha}}\bigg|\mathbb{I}_{D_\epsilon}\right) \leq \lambda e^\gamma \left( \frac{1}{\beta - \epsilon} - \frac{1}{\beta + \epsilon} \right) (1+o(1)) = \mathcal{O}(\epsilon).
    \end{equation}
    Choosing $\epsilon$ sufficiently small, the next three terms in \eqref{eq:partoespe} can be bounded using \eqref{eq:cotaA}-\eqref{eq:cotaC}:
    \begin{itemize}
        \item $\espe\left(\frac{\mathcal{S}\ndep}{n^{1-\alpha}}\mathbb{I}_{A_\epsilon}\right) \leq n^\alpha \proba(A_\epsilon) = o(1)$,
        \item $\espe\left(\frac{\mathcal{S}\ndep}{n^{1-\alpha}}\mathbb{I}_{A_\epsilon^c \cap B_\epsilon}\right) \leq n^{\epsilon+\alpha} \proba(B_\epsilon|A_\epsilon^c) = o(1)$, and
        \item $\espe\left(\frac{\mathcal{S}\ndep}{n^{1-\alpha}}\mathbb{I}_{A_\epsilon^c\cap B_\epsilon^c \cap C_\epsilon}\right) \leq (\lambda+\epsilon) n^\epsilon \proba(C_\epsilon|A_\epsilon^c\cap B_\epsilon^c) = o(1)$.
    \end{itemize}
    Finally, for the last term observe that by the bounds \eqref{eq:cotaA}-\eqref{eq:cotaC} $\proba(D^c_\epsilon)=o(1)$. And by Proposition \ref{prop:meanconvergence} and Corollary \ref{cor:varStilde},
    \begin{equation*}
        \espe\left(\left(\frac{\mathcal{\tilde S}\ndep}{n^{1-\alpha}}\right)^2\right) = \mathcal{O}(1).
    \end{equation*}
     Therefore, by Cauchy-Schwarz inequality we get that
     $\espe\left(\mathcal{\tilde S}\ndep/(\beta n^{1-\alpha})\mathbb{I}_{D^c_\epsilon}\right) = o(1)$. We then conclude that
    \begin{equation*}
        \espe\left(\bigg|\frac{\mathcal{S}\ndep}{n^{1-\alpha}}-\frac{\mathcal{\tilde S}\ndep}{\beta n^{1-\alpha}}\bigg|\right) = \mathcal{O}(\epsilon) + o(1).
    \end{equation*}
    Because $\epsilon$ is arbitrarily small, we get that $\espe\left(\bigg|\frac{\mathcal{S}\ndep}{n^{1-\alpha}}-\frac{\mathcal{\tilde S}\ndep}{\beta n^{1-\alpha}}\bigg|\right) \xrightarrow{n\rightarrow\infty}0$.
    
    The proof that
    \begin{equation*}
        \espe\left(\big|\mathcal{E}\ndep-\mathcal{\tilde S}\ndep\big|\right) = \mathcal{O}(\epsilon) + o(1)
    \end{equation*}
    follows in a similar way.
\end{proof}

\bibliographystyle{alpha}
\bibliography{biblio}

\end{document}